\documentclass[a4paper,12pt]{article}
\usepackage{graphicx} 
\usepackage[utf8]{inputenc}
\usepackage{amsmath}
\usepackage{amsfonts}
\usepackage{amsthm}
\usepackage{amssymb}

\usepackage{enumerate}
\usepackage{url}
\usepackage{color}
\usepackage{ulem}
\usepackage {graphs}
\usepackage {graphics, graphicx}
\usepackage{stmaryrd}

\usepackage{pstricks}
\usepackage{pstricks-add}
\usepackage{pst-plot}

\newcommand{\Nc}{\mathcal{N}}
\newcommand{\Pc}{\mathcal{P}}

\newcommand{\Nb}{\mathbb{N}}

\newcommand{\delete}[1]{}
\newcommand\scalemath[2]{\scalebox{#1}{\mbox{\ensuremath{\displaystyle #2}}}}

\newtheorem{theorem}{Theorem}[section]

\newtheorem{corollary}[theorem]{Corollary}

\newenvironment{example}[1][Example]{\begin{trivlist}
\item[\hskip \labelsep {\bfseries #1}]}{\end{trivlist}}

\begin{document}

\title{On the Complexity of the Mis\`ere Version of Three Games Played on Graphs.}
\author{Gabriel Renault\thanks{Univ. Bordeaux, LaBRI, UMR5800, F-33400 Talence France \newline
CNRS, LaBRI, UMR5800, F-33400 Talence, France},
        Simon Schmidt 
\thanks{Corresponding author. Postal adress: Univ. Grenoble,  Institut Fourier, sfr Maths \`a Modeler, 100 rue des maths, BP 74, 38402 St Martin d'Heres cedex, France.
Phone number: +33 4 76 51 46 56. Email: simonschmidt(at)club-internet.fr }\thanks{Both authors are supported by PEPS Misere Grant.} }
\maketitle
\begin{abstract}
 We investigate the complexity of finding a winning strategy for the mis\`ere version of three games played on graphs: two variants of the game \textsc{NimG}, 
introduced by Stockmann in 2004 and the game \textsc{Vertex Geography} on both directed and undirected graphs. 
We show that on general graphs those three games are \textsc{pspace}-Hard or Complete.
For one \textsc{pspace}-Hard variant of \textsc{NimG}, we find an algorithm to compute an effective winning strategy in time $\mathcal{O}(\sqrt{|V(G)|}.|E(G)|)$ when $G$ is a bipartite 
graph. 
\begin{center} \textbf{Keywords.}  Combinatorial Games, Complexity, Graphs, Mis\`ere.\end{center}
\end{abstract}

\section*{Short author's biography.}

Gabriel Renault recently graduated his PhD entitled ``Combinatorial Games on Graphs". He worked under the supervision of Paul Dorbec and \'Eric Sopena in the LaBRI team at Bordeaux University.

Simon Schmidt is a PhD student under the supervision of Sylvain Gravier in the Maths \`a Modeler team at Joseph Fourier's University  in Grenoble. He works on combinatorial games played on graphs and on graph parameters defined by games.

\section{Background and definitions}

 We assume that the reader has some knowledge in combinatorial games theory. Basic definitions can be found in \cite{LIP}. 
We only remind that $o^+(G)$ denotes the normal outcome of the game $G$, whereas $o^-(G)$ denotes the mis\`ere outcome. 
The outcome of a game is $\Pc$ if the second player has a winning strategy and $\Nc$ if the first to move can win. Graph theoretical
notions used in this paper are standard and according to \cite{Beng-Jensen}. When it makes a difference to allow graphs to have loops, this will be pointed out. 
Complexity notions for games are those defined by Fraenkel in \cite{Fraenkel}.

 In this work we study the complexity of computing the mis\`ere outcome of three impartial combinatorial games played on graphs or directed graphs. 
Two of those games are variants of the famous game called Nim, which was solved by Bouton in 1901 \cite{Bouton}. In those variants, introduced by Stockman in \cite{stockman}, the heaps of tokens are placed on the vertices of 
a graph. Alternately, the players remove some tokens from the current heap and move along the edges of the graph. 
The order in which these two actions are done during a turn leads to two different games: 
\textsc{NimG-RM}, for ``Remove then Move'' and \textsc{NimG-MR}, for ``Move then Remove''. 
The game \textsc{NimG-RM} is played on a graph $G$ together with a function $w:V(G)\rightarrow \Nb$, 
called the weight function. For a vertex $u$, $w(u)$ represents the number of tokens on $u$. This game is played as follows:
\begin{itemize}
 \item There is a pointer on the starting vertex.
 \item The two players play alternately.
 \item During his turn, a player removes any number of tokens from the pointed vertex $u$, and then moves the pointer to a vertex $v$ in the neighbourhood
of $u$. At least one token must be removed.
 \item The player who starts his turn on a vertex with null weight loses in normal convention and wins in mis\`ere convention.
\end{itemize}

We denote by $(G,u,w)$ the game played on the graph $G$, with $u$ as starting vertex and $w$ as weight function. We also denote by $(u,k,v)$ the move consisting in decreasing
$w(u)$ to $k < w(u)$ and then moving to the vertex $v$. 

The game \textsc{NimG-MR} is exactly the same game as above, except that the player starts his turn by moving the pointer and then removes tokens from the pointed vertex.
If a player is forced to move to a null weight vertex, he loses in normal convention and wins in mis\`ere convention. 

We denote by $(G,u,w)$ the game played on the graph $G$, with $u$ as starting vertex and $w$ as weight function. We also denote by $(u,k,v)$ the move consisting 
in moving from the vertex $u$ to the vertex $v$ and then decreasing
$w(v)$ to $k < w(v)$.

\begin{example}
{\rm
Figure~\ref{fig:NGRM} gives an example of a move in \textsc{NimG-RM}.
The current vertex is grey.
The player whose turn it is chooses to remove two tokens from the current vertex and to move to the vertex with one token.
Figure~\ref{fig:NGMR} gives an example of a move in \textsc{NimG-MR}.
The current vertex is grey. 
The player whose turn it is starts by moving the current vertex to the vertex on its right. Then he removes all the tokens from this vertex.
}
\end{example}
\begin{figure}[h!]
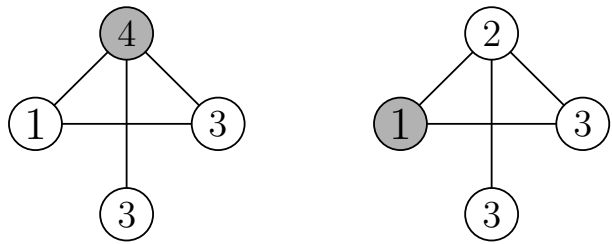

\begin{center}

\scalebox{1.2}{
\begin{graph}(9,3)

\graphnodesize{0.6}
\graphnodecolour{1}
\autodistance{1}
\roundnode{u}(1,1)
\autonodetext{u}{$\scalemath{1.2}{1}$}
\roundnode{a}(2,2)[\graphnodecolour{0.7}]
\autonodetext{a}{$\scalemath{1}{4}$}
\roundnode{b}(3,1)
\autonodetext{b}{$\scalemath{1}{3}$}
\roundnode{c}(2,0)
\autonodetext{c}{$\scalemath{1}{3}$}
\edge{u}{a}
\edge{u}{b}
\edge{a}{c}
\edge{a}{b}

\roundnode{u1}(5,1)[\graphnodecolour{0.7}]
\autonodetext{u1}{$\scalemath{1.2}{1}$}
\roundnode{a1}(6,2)
\autonodetext{a1}{$\scalemath{1}{2}$}
\roundnode{b1}(7,1)
\autonodetext{b1}{$\scalemath{1}{3}$}
\roundnode{c1}(6,0)
\autonodetext{c1}{$\scalemath{1}{3}$}
\edge{u1}{a1}
\edge{u1}{b1}
\edge{a1}{c1}
\edge{a1}{b1}

\end{graph}
}
\vspace{0.3cm}
\caption{Playing a move in \textsc{NimG-RM}}\label{fig:NGRM}
\end{center}
\end{figure}
\begin{figure}[h!]
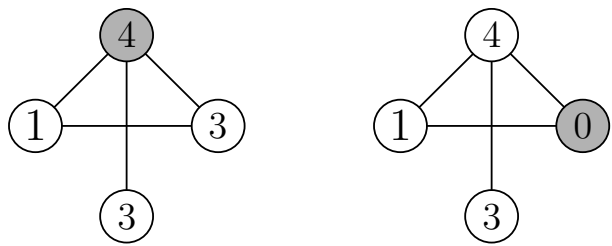

\begin{center}

\scalebox{1.2}{
\begin{graph}(9,3)

\graphnodesize{0.6}
\graphnodecolour{1}
\autodistance{1}
\roundnode{u}(1,1)
\autonodetext{u}{$\scalemath{1.2}{1}$}
\roundnode{a}(2,2)[\graphnodecolour{0.7}]
\autonodetext{a}{$\scalemath{1}{4}$}
\roundnode{b}(3,1)
\autonodetext{b}{$\scalemath{1}{3}$}
\roundnode{c}(2,0)
\autonodetext{c}{$\scalemath{1}{3}$}
\edge{u}{a}
\edge{u}{b}
\edge{a}{c}
\edge{a}{b}

\roundnode{u1}(5,1)
\autonodetext{u1}{$\scalemath{1.2}{1}$}
\roundnode{a1}(6,2)
\autonodetext{a1}{$\scalemath{1}{4}$}
\roundnode{b1}(7,1)[\graphnodecolour{0.7}]
\autonodetext{b1}{$\scalemath{1}{0}$}
\roundnode{c1}(6,0)
\autonodetext{c1}{$\scalemath{1}{3}$}
\edge{u1}{a1}
\edge{u1}{b1}
\edge{a1}{c1}
\edge{a1}{b1}

\end{graph}
}
\vspace{0.3cm}
\caption{Playing a move in \textsc{NimG-MR}}\label{fig:NGMR}
\end{center}
\end{figure}

The third game we focus on is called \textsc{Geography}. \textsc{Geography} is an impartial game played on a directed graph with a token on a vertex.
There exist two variants of the game: \textsc{Vertex Geography} and \textsc{Edge Geography}.
A move in \textsc{Vertex Geography} is to slide the token through an arc and delete the vertex on which the token was.
A move in \textsc{Edge Geography} is to slide the token through an arc and delete the edge on which the token just slid.
In both variants, the game ends when the token is on a sink.

A position is described by a graph and a vertex indicating where the token is.
\begin{figure}[h!]
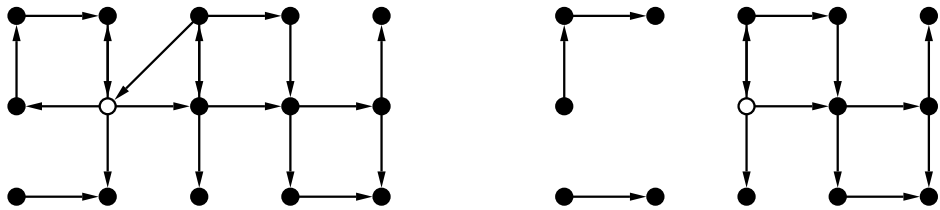

\begin{center}

\scalebox{1.2}{
\begin{graph}(10,3.5)(0,0)
\graphlinewidth{0.025}
\grapharrowlength{0.18}
\roundnode{a}(0,0)
\roundnode{b}(1,0)
\roundnode{c}(2,0)
\roundnode{d}(3,0)
\roundnode{e}(4,0)
\roundnode{f}(0,1)
\roundnode{g}(1,1)[\graphnodecolour{1}]
\roundnode{h}(2,1)
\roundnode{i}(3,1)
\roundnode{j}(4,1)
\roundnode{k}(0,2)
\roundnode{l}(1,2)
\roundnode{m}(2,2)
\roundnode{n}(3,2)
\roundnode{o}(4,2)

\diredge{a}{b}
\diredge{d}{e}
\diredge{f}{k}
\diredge{g}{b}
\diredge{g}{f}
\diredge{g}{h}
\diredge{g}{l}
\diredge{h}{c}
\diredge{h}{i}
\diredge{h}{m}
\diredge{i}{d}
\diredge{i}{j}
\diredge{j}{e}
\diredge{j}{o}
\diredge{k}{l}
\diredge{l}{g}
\diredge{m}{g}
\diredge{m}{h}
\diredge{m}{n}
\diredge{n}{i}

\roundnode{a0}(6,0)
\roundnode{b0}(7,0)
\roundnode{c0}(8,0)
\roundnode{d0}(9,0)
\roundnode{e0}(10,0)
\roundnode{f0}(6,1)
\roundnode{h0}(8,1)[\graphnodecolour{1}]
\roundnode{i0}(9,1)
\roundnode{j0}(10,1)
\roundnode{k0}(6,2)
\roundnode{l0}(7,2)
\roundnode{m0}(8,2)
\roundnode{n0}(9,2)
\roundnode{o0}(10,2)

\diredge{a0}{b0}
\diredge{d0}{e0}
\diredge{f0}{k0}
\diredge{h0}{c0}
\diredge{h0}{i0}
\diredge{h0}{m0}
\diredge{i0}{d0}
\diredge{i0}{j0}
\diredge{j0}{e0}
\diredge{j0}{o0}
\diredge{k0}{l0}
\diredge{m0}{h0}
\diredge{m0}{n0}
\diredge{n0}{i0}

\end{graph}
}
\vspace{0.3cm}
\caption{Playing a move in \textsc{Vertex Geography}}\label{fig:vergeo}
\end{center}
\end{figure}
\begin{example}
{\rm
Figure~\ref{fig:vergeo} gives an example of a move in \textsc{Vertex Geography}.
The token is on the white vertex.
The player whose turn it is chooses to move the token through the arc to the right.
After the removing of this vertex, some vertices (on the left of the directed graph) are no longer reachable.
Figure~\ref{fig:edgeo} gives an example of a move in \textsc{Edge Geography}.
The token is on the white vertex.
The player whose turn it is chooses to move the token through the arc to the right.
After that move, it is possible to go back to the previous vertex immediately as the arc in the other direction is still in the game.
}
\end{example}
\begin{figure}
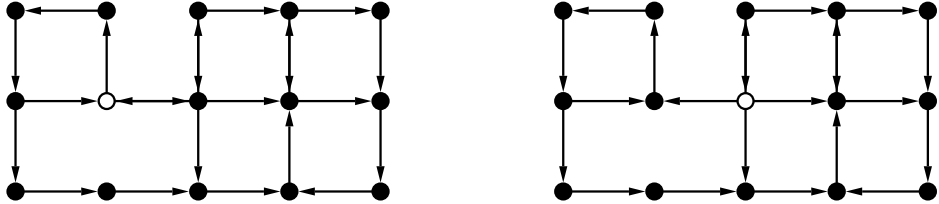

\begin{center}

\scalebox{1.2}{
\begin{graph}(10,3.5)(0,0)
\graphlinewidth{0.025}
\grapharrowlength{0.18}
\roundnode{a}(0,0)
\roundnode{b}(1,0)
\roundnode{c}(2,0)
\roundnode{d}(3,0)
\roundnode{e}(4,0)
\roundnode{f}(0,1)
\roundnode{g}(1,1)[\graphnodecolour{1}]
\roundnode{h}(2,1)
\roundnode{i}(3,1)
\roundnode{j}(4,1)
\roundnode{k}(0,2)
\roundnode{l}(1,2)
\roundnode{m}(2,2)
\roundnode{n}(3,2)
\roundnode{o}(4,2)

\diredge{a}{b}
\diredge{b}{c}
\diredge{c}{d}
\diredge{d}{i}
\diredge{e}{d}
\diredge{f}{a}
\diredge{f}{g}
\diredge{g}{h}
\diredge{g}{l}
\diredge{h}{c}
\diredge{h}{g}
\diredge{h}{i}
\diredge{h}{m}
\diredge{i}{j}
\diredge{i}{n}
\diredge{j}{e}
\diredge{k}{f}
\diredge{l}{k}
\diredge{m}{h}
\diredge{m}{n}
\diredge{n}{i}
\diredge{n}{o}
\diredge{o}{j}

\roundnode{a0}(6,0)
\roundnode{b0}(7,0)
\roundnode{c0}(8,0)
\roundnode{d0}(9,0)
\roundnode{e0}(10,0)
\roundnode{f0}(6,1)
\roundnode{g0}(7,1)
\roundnode{h0}(8,1)[\graphnodecolour{1}]
\roundnode{i0}(9,1)
\roundnode{j0}(10,1)
\roundnode{k0}(6,2)
\roundnode{l0}(7,2)
\roundnode{m0}(8,2)
\roundnode{n0}(9,2)
\roundnode{o0}(10,2)

\diredge{a0}{b0}
\diredge{b0}{c0}
\diredge{c0}{d0}
\diredge{d0}{i0}
\diredge{e0}{d0}
\diredge{f0}{a0}
\diredge{f0}{g0}
\diredge{g0}{l0}
\diredge{h0}{c0}
\diredge{h0}{g0}
\diredge{h0}{i0}
\diredge{h0}{m0}
\diredge{i0}{j0}
\diredge{i0}{n0}
\diredge{j0}{e0}
\diredge{k0}{f0}
\diredge{l0}{k0}
\diredge{m0}{h0}
\diredge{m0}{n0}
\diredge{n0}{i0}
\diredge{n0}{o0}
\diredge{o0}{j0}

\end{graph}
}
\vspace{0.3cm}
\caption{Playing a move in \textsc{Edge Geography}}\label{fig:edgeo}
\end{center}
\end{figure}

\textsc{Geography} can also be played on an undirected graph $G$ by seeing it as a symmetric directed graph where the vertex set remains the same and the arc set is $\{(u,v),(v,u)|(u,v) \in E(G)\}$, except that in the case of \textsc{Edge Geography}, going through an edge $(u,v)$ would remove both the arc $(u,v)$ and the arc $(v,u)$ of the directed version, to leave an undirected graph.

A \textsc{Geography} position is denoted $(G,u)$ where $G$ is the graph, or the directed graph, on which the game is played, and $u$ is the vertex of $G$ where the token is.

 The complexity of computing the normal outcome of these three games was already known. 
Burke and George\cite{Burke} proved that the game \textsc{NimG-MR} is \textsc{pspace}-Hard in normal convention, whereas Duch\^ene and Renault\cite{Renault}
found that \textsc{NimG-RM} is solvable in polynomial time. Lichtenstein and Sipser\cite{Lichtenstein} proved that finding the normal outcome of a \textsc{Vertex Geography} position on a directed graph 
is \textsc{pspace}-complete.
Schaefer\cite{Schaefer} proved that finding the normal outcome of an \textsc{Edge Geography} position on a directed graph is \textsc{pspace}-complete.
On the other hand, Fraenkel, Scheinerman and Ullman\cite{FraenkelGeo} gave a polynomial-time algorithm for finding the normal outcome of a \textsc{Vertex Geography} 
position on any undirected graph, and they also proved that finding the normal outcome of an \textsc{Edge Geography} position on an undirected graph 
is \textsc{pspace}-complete.

In this paper, we extend the investigation to
their mis\`ere version. The second section is devoted to \textsc{Geography}, and the third section to \textsc{NimG}. 

\section{Complexity results for \textsc{Geography} in mis\`ere convention}\label{sectionGeo}

We look here at the game \textsc{Geography} under mis\`ere convention, and show the problem is \textsc{pspace}-complete both on directed graphs and on undirected graphs, 
for both \textsc{Vertex Geography} and \textsc{Edge Geography}. 

We recall all the results in the table below. The stars indicate the results we show here.

\begin{table}[!h]\label{tabcomplNimGRD}
\centering
\begin{tabular}{|l|*{2}{c|}}
 \hline
~      & \textsc{Edge Geography} & \textsc{Vertex Geography}    \\
\hline
Normal & \textsc{pspace}-complete   & \textsc{pspace}-complete    \\
\hline
Misere & \textsc{pspace}-complete (*) & \textsc{pspace}-complete (*)\\
\hline  
\end{tabular}
\caption{ Complexity of \textsc{Geography} on directed graph.}
~\\
\begin{tabular}{|l|*{2}{c|}}
\hline
~      & \textsc{Edge Geography} & \textsc{Vertex Geography} \\
\hline
Normal & \textsc{pspace}-complete & Polynomial \\
\hline
Misere & \textsc{pspace}-complete (*) & \textsc{pspace}-complete (*) \\
\hline  
\end{tabular}
\caption{Complexity of \textsc{Geography} on undirected graph.}
\end{table}

First note that all these problems are in \textsc{pspace} as the length of a game of \textsc{Vertex Geography} is bounded by the number of its vertices, and the length of a game of \textsc{Edge Geography} is bounded by the number of its edges.

We start with \textsc{Vertex Geography} on directed graphs, where the reduction is quite natural, we just add a losing move to every position of the previous graph, move that the players will avoid until it becomes the only available move, that is when the original game will over.

\begin{theorem}
\label{thm:dirgeo}
Finding the mis\`ere outcome of a \textsc{Vertex Geography} position on a directed graph is \textsc{pspace}-complete.
\end{theorem}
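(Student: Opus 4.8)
The plan is to reduce from the normal-play version of \textsc{Vertex Geography} on directed graphs, which is \textsc{pspace}-complete by Lichtenstein and Sipser~\cite{Lichtenstein}; membership in \textsc{pspace} has already been observed. Given a directed graph $G$ with the token on a vertex $u$ (an instance of normal-play \textsc{Vertex Geography}), I would build a directed graph $G'$ by adding a single new vertex $s$ together with an arc $(v,s)$ for every $v\in V(G)$, keeping $u$ as the starting vertex. The key observation is that $s$ is a sink of $G'$, so sliding the token onto $s$ ends the game with the opponent to move and unable to play; under mis\`ere convention that opponent then \emph{wins}, so each new arc is a losing move for whoever uses it. Consequently the players avoid these arcs, and a player is forced to use one precisely when the current vertex has no other out-neighbour, i.e. precisely when the corresponding position of $G$ is terminal for normal play.

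Next I would prove, by induction on the game tree, the following statement: for every induced subgraph $H$ of $G$ and every $v\in V(H)$, writing $G'_H$ for $H$ augmented with $s$ and the arcs $(w,s)$ for $w\in V(H)$, the player to move wins the mis\`ere game $(G'_H,v)$ if and only if the player to move wins the normal game $(H,v)$. Indeed, from $(G'_H,v)$ the available moves are (i) the moves of $(H,v)$, each leading to $(G'_{H-v},w)$ for an out-neighbour $w$ of $v$ in $H$ — here one uses that $s$ is never deleted, since the token, once on $s$, never leaves it, so that deleting $v$ from $G'_H$ yields exactly $G'_{H-v}$ — and (ii) the move onto $s$, which as noted leaves the opponent a winner. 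Hence the player to move wins $(G'_H,v)$ iff some move of type (i) reaches a position lost by the opponent, iff by the induction hypothesis the player to move wins $(H,v)$; the degenerate case in which $v$ is a sink of $H$ is checked directly, since then $(H,v)$ is lost for normal play while the only move from $(G'_H,v)$ is the losing one onto $s$.

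Applying this with $H=G$ and $v=u$ gives $o^-(G',u)=o^+(G,u)$. Since $G\mapsto G'$ is computable in linear time, this is a polynomial reduction, so deciding the mis\`ere outcome of a \textsc{Vertex Geography} position on a directed graph is \textsc{pspace}-hard; together with the membership remark it is \textsc{pspace}-complete.

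The construction itself is elementary, and informally it just appends one forced move to the end of every play, which is exactly why the winner is unchanged. The one point that needs care is the bookkeeping of the mis\`ere convention: one must check that the added arcs are \emph{uniformly} losing and that this extra terminal move never gets played early and never flips the winner. The induction above is what makes this rigorous, provided one keeps in mind that a \textsc{Vertex Geography} position is a pair (remaining induced subgraph, current vertex) rather than merely a vertex.
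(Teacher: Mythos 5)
Your proposal is correct and follows essentially the same route as the paper: append a uniformly losing escape move at every vertex and prove by induction that the mis\`ere outcome of the augmented graph equals the normal outcome of the original. The only difference is that you use one shared sink $s$ instead of a private out-neighbour per vertex; this is fine for the theorem itself, though the paper's private sinks additionally preserve planarity, bipartiteness and bounded degree, which is what yields Corollary~\ref{cor:dirgeo}.
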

\begin{proof}
We reduce the problem from normal \textsc{Vertex Geography} on directed graphs.

Let $G$ be a directed graph.
Let $G'$ be the directed graph with vertex set
$$V(G') = \{u_1,u_2|u\in V(G)\}$$
and arc set
$$A(G') = \{(u_1,v_1)|(u,v)\in A(G)\}\cup\{(u_1,u_2)|u\in V(G)\}$$
that is the graph where each vertex of $G$ gets one extra out-neighbour that was not originally in the graph. 
We claim that the normal outcome of $(G,v)$ is the same as the mis\`ere outcome of $(G',v_1)$ and show it by induction on the number of vertices in $G$.

If $V(G)=\{v\}$, then both $(G,v)$ and $(G',v_1)$ are $\Pc$-positions.
Assume now $|V(G)| \geqslant 2$.
Assume first $(G,v)$ is an $\Nc$-position.
There is a winning move in $(G,v)$ to $(\widetilde{G},u)$.
We show that moving from $(G',v_1)$ to $(\widehat{G}',u_1)$ is a winning move.
We have $V(\widehat{G}') = V(\widetilde{G}')\cup \{v_2\}$ and $A(\widehat{G}') = A(\widetilde{G}')$.
As the vertex $v_2$ is disconnected from the vertex $u_1$ in $\widehat{G}'$, the games $(\widehat{G}',u_1)$ and $(\widetilde{G}',u_1)$ share the same game tree, and they both have outcome $\Pc$ by induction.
Hence $(G',v_1)$ has mis\`ere outcome $\Nc$.
Now assume $(G,v)$ is a \mbox{$\Pc$-position}.
For the same reason as above, moving from $(G',v_1)$ to any $(\widehat{G}',u_1)$ would leave a game whose mis\`ere outcome is the same as the normal outcome of a game obtained after playing a move in $(G,v)$, which is $\Nc$.
The only other available move is from $(G',v_1)$ to $(\widehat{G}',v_2)$, which is a losing move as it ends the game.
Hence $(G',v_1)$ has mis\`ere outcome $\Pc$.
\end{proof}

The proof in \cite{Lichtenstein} actually works even if we only consider planar bipartite directed graphs with maximum degree $3$.
As our reduction keeps the planarity and the bipartition, only adds vertices of degree $1$ and increases the degree of vertices by $1$, we get the following corollary.

\begin{corollary}
\label{cor:dirgeo}
Finding the mis\`ere outcome of a \textsc{Vertex Geography} position on a planar bipartite directed graph with maximum degree $4$ is \textsc{pspace}-complete.
\end{corollary}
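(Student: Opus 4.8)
The plan is to reuse the reduction $G \mapsto G'$ from the proof of Theorem~\ref{thm:dirgeo} unchanged, but to feed it only with instances produced by the sharpened form of the Lichtenstein--Sipser construction. That construction shows that computing the normal outcome of a \textsc{Vertex Geography} position $(G,v)$ is already \textsc{pspace}-hard when $G$ is a planar bipartite directed graph of maximum degree $3$, as recalled just above. Since Theorem~\ref{thm:dirgeo} establishes that the normal outcome of $(G,v)$ coincides with the mis\`ere outcome of $(G',v_1)$, it suffices to check that $G'$ remains planar, bipartite, and of maximum degree $4$ whenever $G$ is planar, bipartite, and of maximum degree $3$.

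All three checks are short. The subgraph of $G'$ induced on $\{u_1 : u \in V(G)\}$ is an isomorphic copy of $G$, while every vertex $u_2$ has in-degree $1$ and out-degree $0$, hence is a pendant vertex of the underlying undirected graph; attaching pendant vertices preserves planarity, so $G'$ is planar. If $(X,Y)$ is a bipartition of the underlying undirected graph of $G$, then placing $u_1$ on the side containing $u$ and $u_2$ on the opposite side yields a proper $2$-colouring of $G'$: edges inside the copy of $G$ are bichromatic since $G$ is bipartite, and each new edge $u_1u_2$ is bichromatic by construction. Finally, $u_1$ retains exactly the arcs incident to $u$ in $G$ and gains precisely the one new out-arc $(u_1,u_2)$, so its undirected degree rises from at most $3$ to at most $4$, and each $u_2$ has degree $1$; thus $\Delta(G') \le 4$.

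Combining these observations, $G \mapsto (G', v_1)$ is a polynomial-time many-one reduction witnessing \textsc{pspace}-hardness of mis\`ere \textsc{Vertex Geography} on planar bipartite directed graphs of maximum degree $4$; membership in \textsc{pspace} was already noted for \textsc{Vertex Geography} in full generality, since a play visits each vertex at most once. I do not expect a real obstacle: the only delicate point is to fix the meaning of ``maximum degree $3$'' for digraphs (I read it as a bound on $d^+(u)+d^-(u)$) and to confirm that the Lichtenstein--Sipser instances satisfy it in that sense, so that the single added arc pushes the bound to exactly $4$ rather than higher.
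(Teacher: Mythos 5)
Your proposal is correct and follows essentially the same route as the paper: invoke the degree-$3$ planar bipartite hardness of normal \textsc{Vertex Geography} from Lichtenstein--Sipser, and observe that the reduction of Theorem~\ref{thm:dirgeo} only attaches pendant out-neighbours, hence preserves planarity and the bipartition while raising the maximum degree by one. Your explicit verification of these three properties (and of \textsc{pspace} membership) is just a more detailed write-up of the paper's one-line justification.
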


For undirected graphs, adding a new neighbour to each vertex would work the same, but the normal version of \textsc{Vertex Geography} on undirected graphs is 
solvable in polynomial time, so we make a reduction from directed graphs, and replace each arc by an undirected gadget.
That gadget would need to act like an arc, that is a player who would want to take it in the wrong direction would lose the game, 
as well as a player who would want to take it when the vertex at the other end has already been played. We want also to force that the player 
who moves in the gadget is the same as the one who moves the token to the other end. In that way, it will be the other player's turn when the token reaches the end vertex of 
the arc gadget, as in the original game.

\begin{theorem}
\label{thm:ungeo}
Finding the mis\`ere outcome of a \textsc{Vertex Geography} position on an undirected graph is \textsc{pspace}-complete.
\end{theorem}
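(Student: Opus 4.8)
The plan is to reduce from \textsc{Vertex Geography} on directed graphs under the \emph{normal} convention, which is \textsc{pspace}-complete by~\cite{Lichtenstein}; membership of the misère undirected problem in \textsc{pspace} has already been noted. From a directed graph $G$ with starting vertex $s$ I build an undirected graph $G'$ as follows: keep the vertices of $G$; attach to each vertex $x\in V(G)$ a private pendant leaf $\widehat x$, as in the proof of Theorem~\ref{thm:dirgeo}; and replace each arc $(u,v)\in A(G)$ by a small undirected \emph{diode gadget} $D_{uv}$ glued to $u$ and to $v$, the gadgets being pairwise disjoint apart from these shared endpoints. Everything hinges on designing $D_{uv}$ so that, whatever the rest of $G'$ looks like, it has two properties: (1) while $v$ is still present, a player to move with the token on $u$ may \emph{enter $D_{uv}$ towards $v$}; from then on play is forced, in the sense that at the few branching vertices any deviation is an immediate loss for the player who deviates, and after an \emph{odd} number of moves the token lies on $v$ with it being the opponent's turn, the only deleted vertices being $u$ and the interior of $D_{uv}$; (2) entering $D_{uv}$ from $v$, or entering $D_{uv}$ from $u$ once $v$ has been deleted, is a losing move: the mover is driven along forced lines ending at a leaf in such a way that it is the mover who makes the last move, onto a sink, so that the opponent wins in the misère convention. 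The pendants serve their usual purpose: a move onto $\widehat x$ leaves the opponent on a sink and therefore loses, so it is played only when nothing else remains, which is precisely what makes a vertex that is a sink of $G$ (a loss for the mover in normal play) behave in $G'$ like a vertex all of whose moves lose.

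Granting such a gadget, I argue by induction on the number of surviving vertices of $G$ (equivalently, on the number of moves still available in the $G$-game) that for every position $(H,x)$ occurring in a play of $(G,s)$, the normal outcome of $(H,x)$ equals the misère outcome of the corresponding \emph{clean} position of $G'$ --- the one with the token on $x$, the surviving original vertices exactly those of $H$, every $D_{uv}$ between two surviving vertices intact, every gadget incident to a deleted original vertex collapsed accordingly, and all pendants present. In the base case $x$ has no out-arc in $H$, so the mover loses in $G$; in the clean position the mover can only move onto $\widehat x$, backwards into an incident gadget, or forwards into a gadget whose far endpoint is gone, and by (1)--(2) all such moves lose, so the misère outcome is $\Pc$ as well. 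For the inductive step, if $(H,x)$ is an $\Nc$-position pick a winning move $x\to y$ in $G$; by~(1), entering $D_{xy}$ towards $y$ starts an odd-length forced run at whose end it is the opponent's turn on the clean position of $(H-x,y)$, a position with misère outcome $\Pc$ by the induction hypothesis (since $(H-x,y)$ has normal outcome $\Pc$), so this is a winning option and the clean position of $(H,x)$ is an $\Nc$-position. If $(H,x)$ is a $\Pc$-position, then from the clean position of $(H,x)$ every move is either one of the losing moves of (1)--(2) or enters some $D_{xy}$ towards a surviving $y$ and, by~(1) and the induction hypothesis, leaves the opponent on the clean position of $(H-x,y)$, whose misère outcome is $\Nc$; hence every option loses and the clean position is a $\Pc$-position. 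Applying this with $(H,x)=(G,s)$ gives that $(G,s)$ and $(G',s)$ have the same outcome; since $G'$ is built in polynomial time, the problem is \textsc{pspace}-hard, hence \textsc{pspace}-complete.

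The real content of the proof is the design and verification of the diode gadget. In the directed case it was enough to append a single losing move; here that fails, because a plain undirected path between $u$ and $v$ is walkable in either direction and any trap hung off it is merely an option the relevant player declines rather than a genuine constraint. The gadget must therefore enforce true one-wayness while simultaneously keeping the forward traversal of odd length and cooperating with the misère convention --- in which the disadvantage is being forced to keep moving, so that a punishing branch must be a forced walk into a leaf of exactly the right parity and the wrong-way player must be funnelled until no option but such a branch survives, which must coincide with the moment the $G$-game would have ended. Once a correct gadget is fixed, the reduction and the induction above are straightforward; the only additional bookkeeping is a short, entirely forced analysis of the non-clean positions of $G'$ in which a gadget is mid-traversal, together with the routine check that the inert fragments which a completed forward traversal leaves dangling from $v$ never change an outcome.
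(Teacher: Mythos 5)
Your skeleton coincides with the paper's: reduce from normal \textsc{Vertex Geography} on directed graphs, give each original vertex a pendant neighbour, replace each arc $(u,v)$ by an undirected ``one-way'' gadget, and run an induction showing that the normal outcome of the directed position equals the mis\`ere outcome of the translated position, modulo dead gadget debris. But the proposal stops exactly where the proof starts: you never exhibit the diode gadget, and you yourself concede that ``the real content of the proof is the design and verification'' of it. Granting its existence is not an innocuous step --- the whole difficulty of the undirected case, as opposed to Theorem~\ref{thm:dirgeo} where appending a single losing move suffices, is that it is not at all clear that an undirected graph fragment can simulate a directed arc under mis\`ere play with the right parity. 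The paper's proof consists almost entirely of an explicit $8$-vertex gadget (Figure~\ref{fig:diredge}) together with a move-by-move case analysis establishing the three facts you list as axioms: backward entry at the exit vertex $wv_7$ loses, entry at $wv_1$ after the head $w$ has been deleted loses, and the forward traversal lets the entering player either win inside the gadget or hand the token to $v$ with the opponent to move. Without a concrete gadget and that verification, no reduction has been produced.

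There is also a substantive mismatch between your gadget specification and what is actually achievable, which shows the specification cannot simply be taken on faith. You require the gadget's behaviour to hold ``whatever the rest of $G'$ looks like,'' with every deviation an \emph{immediate} loss. The paper's gadget does not satisfy this in that strong local form: in the backward-entry analysis, the punishing reply after the deviation to $wv_3$ depends on whether the move from $wv_1$ to $w$ is winning or losing in the ambient position, and the paper must argue the two cases separately (this is legitimate only after observing that the status of that move is fixed once $v$ is gone, since the $wv_i$ become unreachable otherwise). Similarly the forward traversal is not ``forced'': the responder has genuine branches ($uv_2$, $uv_3$, $uv_6$), and one must verify a winning reply to each, with the final step comparing $(\widehat{G}',v)$ to $(\widetilde{G}',v)$ and checking that the leftover moves into spent gadgets are exactly the losing moves already analysed. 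So the inductive wrapper you give is fine and matches the paper, but the theorem is not proved until a gadget is constructed and these position-dependent verifications are carried out.
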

\begin{proof}
We reduce the problem from normal \textsc{Vertex Geography} on directed graphs.

We introduce a gadget that will replace any arc $(u,v)$ of the original directed graph, and add a neighbour to each vertex to have an undirected graph whose mis\`ere outcome is the normal outcome of the original directed graph.

Let $G$ be a directed graph.
Let $G'$ be the undirected graph with vertex set 
$$\begin{array}{r@{\ }c@{\ }l}
V(G') = & & \{u,u'|u \in V(G)\} \\
        & \cup & \{uv_i |(u,v) \in A(G), 1\leq i \leq 8\} \\
\end{array}$$
and edge set 
$$\begin{array}{r@{\ }c@{\ }l}
E(G') & = & \{(u,uv_1),(uv_1,uv_2),(uv_1,uv_3),(uv_1,uv_6),(uv_2,uv_4),(uv_3,uv_5), \\
        & & (uv_3,uv_6),(uv_4,uv_5),(uv_4,uv_6),(uv_5,uv_6),(uv_6,uv_7),(uv_7,uv_8), \\
        & & (uv_7,v) | (u,v) \in A(G)\} \\
      & \cup & \{(u,u')|u \in V(G)\}
\end{array}$$
that is the graph where every arc $(u,v)$ of $G$ has been replaced by the gadget of Figure~\ref{fig:diredge}, identifying both $u$ vertices and both $v$ vertices, and each vertex of $G$ gets one extra neighbour that was not originally in the graph.
\begin{figure}
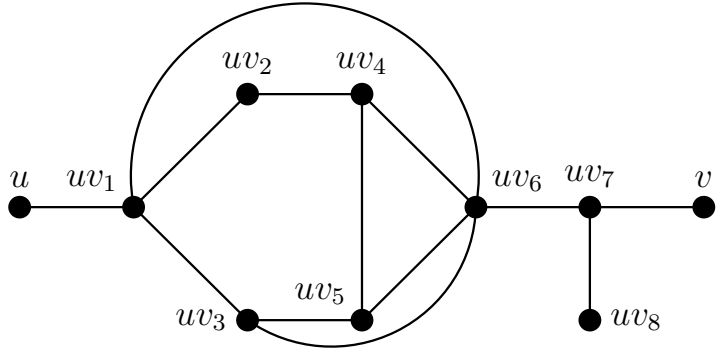

\begin{center}

\scalemath{1.5}{
\begin{graph}(6,3)(0,0)
\roundnode{a}(0,1)
\autonodetext{a}[n]{$\scalemath{0.75}{u}$}
\roundnode{b}(1,1)
\autonodetext{b}[nw]{$\scalemath{0.75}{uv_1}$}
\roundnode{c}(2,2)
\autonodetext{c}[n]{$\scalemath{0.75}{uv_2}$}
\roundnode{d}(2,0)
\autonodetext{d}[w]{$\scalemath{0.75}{uv_3}$}
\roundnode{e}(3,2)
\autonodetext{e}[n]{$\scalemath{0.75}{uv_4}$}
\roundnode{f}(3,0)
\autonodetext{f}[nw]{$\scalemath{0.75}{uv_5}$}
\roundnode{g}(4,1)
\autonodetext{g}[ne]{$\scalemath{0.75}{uv_6}$}
\roundnode{h}(5,1)
\autonodetext{h}[n]{$\scalemath{0.75}{uv_7}$}
\roundnode{i}(5,0)
\autonodetext{i}[e]{$\scalemath{0.75}{uv_8}$}
\roundnode{j}(6,1)
\autonodetext{j}[n]{$\scalemath{0.75}{v}$}

\edge{a}{b}
\edge{b}{c}
\edge{b}{d}
\bow{b}{g}{0.6}
\edge{c}{e}
\edge{d}{f}
\bow{d}{g}{-0.3}
\edge{e}{f}
\edge{e}{g}
\edge{f}{g}
\edge{g}{h}
\edge{h}{i}
\edge{h}{j}

\end{graph}
}
\vspace{0.6cm}
\caption{The arc gadget}\label{fig:diredge}
\end{center}
\end{figure}
We claim that the normal outcome of $(G,u)$ is the same as the mis\`ere outcome of $(G',u)$ and show it by induction on the number of vertices in $G$.

If $V(G) = u$, then $(G,u)$ is a normal $\Pc$-position.
In $(G',u)$ the first player can only move to $(\widehat{G}',u')$ where the second player wins as he cannot move.

Now assume $|V(G)| \geqslant 2$.

We first show that no player wants to move the token from $v$ to any $wv_7$, whether $w$ has been played or not.
We will only prove it for moving the token from $v$ to some $wv_7$ where $w$ is still in the game, as the other case is similar.
First note that, if $w$ is removed from the game in the sequence of moves following that first move, as $v$ is already removed, all vertices of the form $wv_i$ would be disconnected from the token, and therefore unreachable.
Hence whether the move from $wv_1$ to $w$ is winning does not depend on the set of vertices deleted in that sequence, and it is possible to argue the two cases.
Assume the first player moved the token from $v$ to any $wv_7$.
Then the second player can move the token to $wv_6$.
From there, the first player has four choices.
If she goes to $wv_1$, the second player answers to $wv_2$, then the rest of the game is forced and the second player wins.
If she goes to $wv_4$, he answers to $wv_2$ where she can only move to $wv_1$, and let him go to $wv_3$ where she is forced to play to $wv_5$ and she loses.
The case where she goes to $wv_5$ is similar.
In the case where she goes to $wv_3$, we argue two cases:
if the move from $wv_1$ to $w$ is winning, he answers to $wv_5$, where all is forced until he gets the move to $w$;
if that move is losing, he answers to $wv_1$, from where she can either go to $w$, which is a losing move by assumption, or go to $wv_2$ where every move is forced until she loses.

We now show that no player wants to move the token from $v$ to any $vw_1$ where $w$ has already been played.
Assume the first player just played that move.
Then the second player can move the token to $vw_3$.
From there, the first player has two choices.
If she plays to $vw_6$, he answers to $vw_4$, where she can only end the game and lose.
If she plays to $vw_5$, he answers to $vw_4$, where the move to $vw_2$ is immediately losing, and the move to $vw_6$ forces the token to go to $vw_7$ and then to $vw_8$, where she loses.

Assume first that $(G,u)$ is an $\Nc$-position.
There is a winning move in $(G,u)$ to some $(\widetilde{G},v)$.
We show that moving the token from $u$ to $uv_1$ in $G'$ is a winning move for the first player.
From there, the second player has three choices.
If he moves the token to $uv_6$, the first player answers to $uv_3$, then the rest of the game is forced and the first player wins.
If he moves the token to $uv_2$, the first player answers to $uv_4$, where the second player again has two choices:
either he goes to $uv_6$, she answers to $uv_5$ where he is forced to lose by going to $uv_3$;
or he goes to $uv_5$, she answers to $uv_6$ where the move to $uv_3$ is immediately losing and the move to $uv_7$ is answered to a game $(\widehat{G}',v)$.
As $u'$ and all vertices of the form $uv_i$ are either played or disconnected from $v$ in $\widehat{G}'$, the only differences in the possible moves in (followers of) the games $(\widehat{G}',v)$ and $(\widetilde{G}',v)$ are moves from a vertex $w$ to $wu_1$ or to $uw_7$, so they both have outcome $\Pc$ by induction.
The case where he chooses to move the token to $uv_3$ is similar.
Hence $(G',u)$ is an $\Nc$-position.

Now assume $(G,u)$ is a $\Pc$-position.
Then any $(\widetilde{G},v)$ that can be obtained after a move from $(G,u)$ is an $\Nc$-position.
Moving the token to $u'$ in $G'$ is immediately losing, so we may assume the first player moves it to some $uv_1$, where the second player answers to $uv_3$.
From there the first player has two choices.
If she goes to $uv_6$, the second player answers by going to $uv_4$, where both available moves are immediately losing.
If she goes to $uv_5$, he answers to $uv_4$, where the move to $uv_2$ is immediately losing, and the move to $uv_6$ is answered to $uv_7$, where again the move to $uv_8$ is immediately losing, so we may assume he moves the token to $v$.
As $u'$ and all vertices of the form $uv_i$ are either played or disconnected from $v$ in $\widehat{G}'$, the only differences in the possible moves in (followers of) the games $(\widehat{G}',v)$ and $(\widetilde{G}',v)$ are moves from a vertex $w$ to $wu_1$ or to $uw_7$, so they both have outcome $\Nc$ by induction.
Hence $(G',u)$ is a $\Pc$-position.
\end{proof}

Again, using the fact that the proof in \cite{Lichtenstein} actually works even if we only consider planar bipartite directed graphs with maximum degree $3$,
as our reduction keeps the planarity since the gadget is planar with the vertices we link to the rest of the graph being on the same face, only adds vertices of 
degree at most $5$ and increases the degree of vertices by $1$, we get the following corollary.

\begin{corollary}
\label{cor:undirgeo}
Finding the mis\`ere outcome of a \textsc{Vertex Geography} position on a planar undirected graph with degree at most $5$ is \textsc{pspace}-complete.
\end{corollary}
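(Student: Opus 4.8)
The plan is to run the \emph{same} reduction as in the proof of Theorem~\ref{thm:ungeo}, but fed with a more restricted family of inputs. As observed just after Theorem~\ref{thm:dirgeo}, the construction of Lichtenstein and Sipser in~\cite{Lichtenstein} already shows that deciding the normal outcome of \textsc{Vertex Geography} is \textsc{pspace}-complete when the input directed graph is planar, bipartite, and of maximum degree~$3$. So I would start from such a directed graph $G$, form $G'$ exactly as in Theorem~\ref{thm:ungeo} --- replace every arc $(u,v)$ by the gadget of Figure~\ref{fig:diredge}, identifying the two copies of $u$ and the two copies of $v$, and attach one pendant neighbour $u'$ to each $u\in V(G)$. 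Theorem~\ref{thm:ungeo} then gives that the normal outcome of $(G,u)$ equals the mis\`ere outcome of $(G',u)$, and membership in \textsc{pspace} was recorded at the beginning of this section; hence all that is left is to check that $G'$ is planar and has maximum degree at most~$5$.

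The degree bound is a direct reading of the edge set of $G'$. Among the vertices created by the gadget, $uv_6$ is adjacent to $uv_1,uv_3,uv_4,uv_5$ and $uv_7$ and so has degree~$5$, every other $uv_i$ has degree at most~$4$, and each pendant vertex $u'$ has degree~$1$. An original vertex $u$ of $G$ gains, in $G'$, one edge to $uv_1$ for each out-arc $(u,v)$, one edge to $wu_7$ for each in-arc $(w,u)$, and the single edge $(u,u')$; hence $\deg_{G'}(u)=\deg_G(u)+1\leqslant 4$. Therefore $G'$ has maximum degree~$5$. (Bipartiteness is \emph{not} preserved, since the gadget contains the triangle $uv_4uv_5uv_6$, which is precisely why the statement asks only for bounded degree and not for bipartiteness.)

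For planarity, the one thing that needs checking is that the gadget of Figure~\ref{fig:diredge} admits a planar embedding in which its two attachment vertices $u$ and $v$ lie on a common face. Suppressing the degree-$2$ vertex $uv_2$, the core on $\{uv_1,uv_3,uv_4,uv_5,uv_6\}$ is the wheel $W_4$ with rim cycle $uv_1uv_3uv_5uv_4$ and hub $uv_6$; embedding it with the triangular face $uv_1uv_3uv_6$ as outer face places both $uv_1$ and $uv_6$ on the outer face, after which one replaces the core edge between $uv_1$ and $uv_4$ by the path $uv_1uv_2uv_4$ and draws the edge $(u,uv_1)$, the path $uv_6uv_7v$ and the pendant vertex $uv_8$ in the outer region, so that $u$ and $v$ end up on a common face. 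Given this, one fixes a planar drawing of $G$ seen as an undirected graph, replaces each edge-curve by a private thin strip carrying a copy of the gadget with $u$ and $v$ glued at its two ends, and puts each pendant vertex $u'$ in a face incident to $u$; no crossing is introduced, so $G'$ is planar. I expect this planarity verification to be the only genuinely delicate point --- everything else is routine bookkeeping --- and once it is in place the corollary follows.
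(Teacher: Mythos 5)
Your proposal is correct and follows essentially the same route as the paper: feed the planar, bipartite, maximum-degree-$3$ instances of Lichtenstein--Sipser into the reduction of Theorem~\ref{thm:ungeo}, then check that the construction preserves planarity (gadget embeddable with $u$ and $v$ on a common face) and yields maximum degree $5$. Your degree accounting and the explicit wheel-$W_4$ embedding of the gadget just spell out the details the paper only asserts in passing, so nothing further is needed.
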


Though mis\`ere play is generally considered harder to solve than normal play, the feature that makes it hard is the fact that disjunctive sums do not behave as nicely as in normal play, and \textsc{Geography} is a game that does not split into sums.
Hence the above result appears a bit surprising as it was not expected.

We now look at \textsc{Edge Geography} where the reductions are very similar to the one for \textsc{Vertex Geography} on directed graphs.

We start with the undirected version.

\begin{theorem}
\label{thm:undiregeo}
Finding the mis\`ere outcome of an \textsc{Edge Geography} position on an undirected graph is \textsc{pspace}-complete.
\end{theorem}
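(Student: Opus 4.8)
The plan is to prove \textsc{pspace}-hardness by reduction from \textsc{Edge Geography} on \emph{undirected} graphs in normal convention, which is \textsc{pspace}-complete by Fraenkel, Scheinerman and Ullman~\cite{FraenkelGeo}; membership in \textsc{pspace} has already been noted, since a play of \textsc{Edge Geography} lasts at most $|E(G)|$ moves. The reduction is the pendant trick used for \textsc{Vertex Geography} on directed graphs in Theorem~\ref{thm:dirgeo}: from an undirected graph $G$ I would build $G'$ by attaching to every vertex $u\in V(G)$ a fresh private neighbour $u'$ of degree $1$, joined to $u$ by a single edge. The statement to establish is that $o^+(G,u)=o^-(G',u)$ for every $u\in V(G)$, and I would prove it by induction on $|E(G)|$.

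First I would record the structural behaviour of the gadget. A pendant edge $(u,u')$ can be traversed only once and only in the direction $u\to u'$: entering $u'$ deletes its unique incident edge, so $u'$ is always a terminal vertex, and in particular the token can never sit on $u$ with $(u,u')$ already removed. Hence, whenever the token lies on an original vertex $w$ in a position reachable from $(G',u)$, the surviving original edges form the edge set of an \textsc{Edge Geography} position $(H,w)$ with $H\subseteq G$, the extra move to $w'$ is always available, and that move is immediately losing in mis\`ere convention because it leaves the opponent with no move (so the opponent wins). Moreover $G'\setminus(u,v)=(G\setminus(u,v))'$, so the reduction commutes with playing an original move; this is exactly what the induction needs.

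The base case is immediate: if $G$ is edgeless then $(G,u)$ is a normal $\Pc$-position, while in $(G',u)$ the first player is forced to play to $u'$ and loses, so $o^-(G',u)=\Pc$. For the inductive step, if $(G,u)$ is a normal $\Nc$-position with a winning move along $(u,v)$, I would play the corresponding move in $G'$, reaching $\big((G\setminus(u,v))',v\big)$, which by induction has mis\`ere outcome $o^+(G\setminus(u,v),v)=\Pc$; hence $o^-(G',u)=\Nc$. If $(G,u)$ is a normal $\Pc$-position, then in $(G',u)$ the move to $u'$ loses at once while every original move leads, by induction, to a mis\`ere $\Nc$-position (the ``no original move'' case being the edgeless-at-$u$ situation, which is vacuously covered), so $o^-(G',u)=\Pc$. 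This yields the equivalence and the theorem.

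The one point that needs genuine care is the structural claim about the pendant gadget: one must be sure that the edge-deletion dynamics peculiar to \textsc{Edge Geography} never let the token escape back out of a pendant vertex, and never force a player into a pendant before the simulated game is truly over, i.e.\ that a player is forced onto $w'$ from $w$ precisely when $w$ would be a sink in the simulated position. Once this is nailed down the induction is entirely routine. Finally, since the construction keeps the graph planar when the source graph is, adds only vertices of degree $1$, and raises each degree by $1$, a bounded-degree refinement (and, if the source hardness instances can be taken planar, a planar refinement) would follow by the same argument.
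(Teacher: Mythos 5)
Your proposal is correct and is essentially the paper's own reduction: attach a pendant neighbour to every vertex, reduce from normal \textsc{Edge Geography} on undirected graphs, and show $o^+(G,u)=o^-(G',u)$ by induction, with the pendant move always available but immediately losing in mis\`ere play. Your choice of inducting on $|E(G)|$ (the right measure here, since moves delete edges) and your explicit verification that a pendant edge can be traversed only once, ending the game, simply fill in details the paper leaves as ``similar to Theorem~\ref{thm:dirgeo}.''
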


\begin{proof}
We reduce the problem from normal \textsc{Edge Geography} on undirected graphs.

Let $G$ be an undirected graph.
Let $G'$ be the undirected graph with vertex set
$$V(G') = \{u_1,u_2|u\in V(G)\}$$
and edge set
$$E(G') = \{(u_1,v_1)|(u,v)\in E(G)\}\cup\{(u_1,u_2)|u\in V(G)\}$$
that is the graph where each vertex of $G$ gets one extra neighbour that was not originally in the graph. 
We claim that the normal outcome of $(G,v)$ is the same as the mis\`ere outcome of $(G',v_1)$ and show it by induction on the number of vertices in $G$.
The proof is similar to the proof of Theorem~\ref{thm:dirgeo}
\end{proof}

We now look at \textsc{Edge Geography} on directed graphs.

\begin{theorem}
\label{thm:diregeo}
Finding the mis\`ere outcome of an \textsc{Edge Geography} position on a directed graph is \textsc{pspace}-complete.
\end{theorem}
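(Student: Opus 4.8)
The plan is to establish \textsc{pspace}-hardness by a reduction from normal \textsc{Edge Geography} on directed graphs, which is \textsc{pspace}-complete by Schaefer~\cite{Schaefer}; membership in \textsc{pspace} has already been observed, since the length of an \textsc{Edge Geography} play is bounded by the number of arcs. Given a directed graph $G$, I would use exactly the construction already used in Theorems~\ref{thm:dirgeo} and~\ref{thm:undiregeo}: let $G'$ have vertex set $\{u_1,u_2\mid u\in V(G)\}$ and arc set $\{(u_1,v_1)\mid (u,v)\in A(G)\}\cup\{(u_1,u_2)\mid u\in V(G)\}$, so that every vertex $u_1$ acquires one fresh out-neighbour $u_2$, which is a sink. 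This is clearly a polynomial-time transformation, and the claim to prove is that the normal outcome of $(G,v)$ equals the mis\`ere outcome of $(G',v_1)$.

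I would prove the claim by induction on the number of arcs of $G$. The base case is when $v$ is a sink of $G$ (in particular when $A(G)=\emptyset$): then $(G,v)$ is a normal $\Pc$-position, and in $(G',v_1)$ the only move is along $(v_1,v_2)$ to the sink $v_2$, after which the opponent cannot move and therefore wins in mis\`ere convention, so $(G',v_1)$ is a mis\`ere $\Pc$-position. For the inductive step, suppose $v$ is not a sink. If $(G,v)$ is an $\Nc$-position, pick a winning move deleting an arc $(v,u)$ and reaching a normal $\Pc$-position $(\widetilde{G},u)$; the corresponding move in $G'$ deletes $(v_1,u_1)$ and reaches exactly the position obtained by applying the construction to $\widetilde{G}$, whose mis\`ere outcome is $\Pc$ by the induction hypothesis, so $(G',v_1)$ is an $\Nc$-position. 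If $(G,v)$ is a $\Pc$-position, then every move from $(G',v_1)$ is either the move to the sink $v_2$ — immediately losing in mis\`ere, since it hands the opponent a position with no move — or a move to some $u_1$ deleting $(v_1,u_1)$, reaching a position whose mis\`ere outcome, by induction, is the normal outcome $\Nc$ of $(\widetilde{G},u)$; hence $(G',v_1)$ is a $\Pc$-position.

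The only point that genuinely requires care, and which I would spell out explicitly, is that this argument really does run in parallel with a play of $(G,v)$. Since \textsc{Edge Geography} deletes arcs rather than vertices, the token may revisit a vertex $u_1$ several times; but the added arc $(u_1,u_2)$ is never used voluntarily, because traversing it always loses (it leaves the opponent unable to move, hence winning in mis\`ere), so along any line every arc of $G'$ of the form $(x_1,y_1)$ is present exactly when $(x,y)$ is still present in the simulated game $G$, and the move along $(v_1,v_2)$ is forced precisely when the token in $G$ sits on a vertex with no remaining out-arc, i.e. precisely when the original game is lost under normal convention. This is the same phenomenon exploited in Theorem~\ref{thm:dirgeo}, and it is the one place where the edge-deletion rule (as opposed to the vertex-deletion rule) must be checked; beyond it the proof is routine.
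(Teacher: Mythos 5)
Your proposal is correct and follows essentially the same route as the paper: the identical construction adding a fresh sink out-neighbour $u_2$ to every vertex, with the claim that the normal outcome of $(G,v)$ equals the mis\`ere outcome of $(G',v_1)$, argued as in Theorem~\ref{thm:dirgeo}. The only difference is that you induct on the number of arcs rather than vertices, which is in fact the more natural measure for \textsc{Edge Geography} (where moves delete arcs, not vertices), and your explicit check that the arcs $(u_1,u_2)$ survive revisits is a sound, if strictly unnecessary, refinement of the paper's sketch.
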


\begin{proof}
We reduce the problem from normal \textsc{Edge Geography} on directed graphs.

Let $G$ be a directed graph.
Let $G'$ be the directed graph with vertex set
$$V(G') = \{u_1,u_2|u\in V(G)\}$$
and arc set
$$A(G') = \{(u_1,v_1)|(u,v)\in A(G)\}\cup\{(u_1,u_2)|u\in V(G)\}$$
that is the graph where each vertex of $G$ gets one extra out-neighbour that was not originally in the graph. 
We claim that the normal outcome of $(G,v)$ is the same as the mis\`ere outcome of $(G',v_1)$ and show it by induction on the number of vertices in $G$.
The proof is similar to the proof of Theorem~\ref{thm:dirgeo}
\end{proof}

\section{Complexity results for \textsc{NimG} in mis\`ere convention}\label{sectionNimG}

In this section, we answer a question from Duchene and Renault. In \cite{Renault}, they found a polynomial algorithm to compute the normal outcome of 
the game \textsc{NimG-RM}
and asked if there is one in mis\`ere convention. We will show that the mis\`ere version of \textsc{NimG-RM} is \textsc{pspace}-Hard on general graphs. 
Our proof, like Burke and George's proof, used a reduction from the game \textsc{Vertex Geography}, which is \textsc{pspace}-Complete (see section \ref{sectionGeo}).
But if we only consider the game on bipartite graphs, we get an algorithm to find an effective strategy in time $\mathcal{O}(\sqrt{|V(G)|}.|E(G)|)$. 
We also show that \textsc{NimG-MR} is \textsc{pspace}-Hard in mis\`ere convention.

We start with a summary of the known results in the two tables below. The stars indicate the new results we prove in this paper. Because loops may sometimes make a difference, 
we note +L when there is a loop on all the vertices, and +NL when loops are not permitted. As said in the introduction, the results for the polynomial complexity
 of \textsc{NimG-RM} in normal play 
  are due to Renault and Duchene \cite{Renault}, whereas the \textsc{pspace}-Hardness results for \textsc{NimG-MR} are due to Burke 
and George \cite{Burke}.

\begin{table}[!h]\label{tabcomplNimGRD2}
\centering
\begin{tabular}{|l|*{3}{c|}}
 \hline
~      & \textsc{NimG-RM+L} & \textsc{NimG-RM}    & \textsc{NimG-RM+NL} \\
\hline
Normal & Polynomial   & Polynomial     & Polynomial        \\
\hline
Mis\`ere & Polynomial (*)       & \textsc{pspace}-Hard (*) & \textsc{pspace}-Hard (*)\\
\hline  
\end{tabular}
\caption{ Complexity of \textsc{NimG-RM}.}
~\\
\begin{tabular}{|l|*{3}{c|}}
\hline
~      & \textsc{NimG-MR+L} & \textsc{NimG-MR} & \textsc{NimG-MR+NL} \\
\hline
Normal & \textsc{pspace}-Hard & \textsc{pspace}-Hard & \textsc{pspace}-Hard \\
\hline
Mis\`ere & \textsc{pspace}-Hard (*) & \textsc{pspace}-Hard (*)& \textsc{pspace}-Hard (*) \\
\hline  
\end{tabular}
\caption{Complexity of \textsc{NimG-MR}.}
\end{table}

We start with the proof that the mis\`ere version of \textsc{NimG-RM+NL} is \textsc{pspace}-Hard on general graphs. 
We reduce the normal version of \textsc{Vertex Geography} on directed graphs to the mis\`ere version of \textsc{NimG-RM+NL}. 
When a vertex with only one token is visited in \textsc{NimG-RM}, its weight is necessarily decreased to $0$. 
Since in mis\`ere convention, moving to a null weight vertex is a losing move, no player
wants to move further to this vertex. Decreasing the weight function to $0$ in \textsc{NimG-RM+NL} is therefore convenient to simulate the clear of a vertex in 
\textsc{Vertex Geography}. Like in theorem \ref{thm:undiregeo}, the key of the proof is the design of a gadget that acts like an oriented edge.

\begin{theorem}
 The mis\`ere version of \textsc{NimG-RM+NL} is \textsc{pspace}-Hard. 
\end{theorem}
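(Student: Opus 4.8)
The plan is to reduce the normal version of \textsc{Vertex Geography} on directed graphs (which is \textsc{pspace}-complete by Lichtenstein and Sipser, see Section~\ref{sectionGeo}) to the mis\`ere version of \textsc{NimG-RM+NL}. Starting from a directed graph $G$ with token on a vertex $v$, I would build a \textsc{NimG-RM} instance $(G',u_0,w)$ where $G'$ has no loops, and show that $o^+(G,v) = o^-(G',u_0,w)$. The rough idea of the encoding: each vertex $u$ of $G$ is represented by a vertex of $G'$ carrying exactly one token, so that landing on it in \textsc{NimG-RM} forces its weight to drop to $0$ — this makes it ``deleted'' for the rest of the game, mirroring the vertex-deletion rule of \textsc{Vertex Geography}. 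Each arc $(u,v)$ of $G$ is replaced by an oriented-edge gadget, a small no-loop subgraph with carefully chosen weights connecting the vertex-gadget of $u$ to the vertex-gadget of $v$. The gadget must have three properties: (i) a player who enters it in the correct direction can traverse it and hand the token to the other player at the far endpoint, with parity preserved (so turn order matches the original game); (ii) a player who tries to traverse it in the wrong direction loses; (iii) a player who tries to enter it when the target endpoint has already been played (weight $0$) loses, so such moves are never rational — exactly as in the undirected \textsc{Vertex Geography} reduction of Theorem~\ref{thm:undiregeo}.

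The key steps, in order, would be: First, design the vertex gadget and the arc gadget explicitly, assigning a weight function, and verify the gadget is loopless and that all intended ``structural'' moves are forced or dominated as described in (i)--(iii) above; this requires a finite case analysis internal to the gadget, of the same flavour as the analysis in the proof of Theorem~\ref{thm:ungeo}, checking each branch a player could deviate into and exhibiting the responding winning line. Second, set up the induction — on the number of not-yet-deleted vertices of $G$ (equivalently, on the number of moves remaining), with base case: $G$ a single sink, where in normal \textsc{Vertex Geography} the mover loses, and in $(G',\cdot,w)$ the first player is forced onto a weight-$0$ vertex (or into a dead-end gadget configuration) and wins under mis\`ere — so both outcomes are $\Pc$. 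Wait: I should be careful with the base case; in normal \textsc{Vertex Geography} on a single vertex the first player cannot move and \emph{loses}, so the position is $\Pc$; in the mis\`ere \textsc{NimG} image the first player should correspondingly be unable to make any non-losing move, hence that position is also $\Pc$. The gadget and weights must be arranged so this matches. Third, carry out the inductive step in both directions exactly as in Theorem~\ref{thm:ungeo}: if $(G,v)$ is an $\Nc$-position via a winning move to $(\widetilde G,v')$, show the corresponding gadget-traversal in $G'$ is a winning first move, using the gadget properties to see that after the forced traversal sequence the residual game is (a follower sharing the game tree of) the \textsc{NimG} image of $(\widetilde G, v')$, which is $\Pc$ by induction; if $(G,v)$ is a $\Pc$-position, show every first move in $(G',u_0,w)$ either loses immediately (moving onto a weight-$0$ vertex, or down a dead branch) or leads, after the opponent's forced replies, to the \textsc{NimG} image of some $(\widetilde G, v')$, which is $\Nc$ by induction.

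The main obstacle I expect is the gadget design together with the bookkeeping of \emph{which} deviations inside a gadget are available. Unlike in \textsc{Vertex Geography}, here a \textsc{NimG-RM} move is a pair (how many tokens to remove, where to move), so when a player enters a gadget vertex with several tokens she has genuinely many options, not just one per out-neighbour; the weights inside the gadget must be chosen (likely all equal to $1$ except possibly a few control vertices) so that this extra freedom does not create an unintended escape route or an unintended win. A second delicate point is the parity/turn-count of the forced traversal: the number of moves used to cross a gadget must have fixed parity so that the player who initiates the crossing is the one who delivers the token to the far vertex, matching ``it is now the opponent's turn at the new vertex'' in \textsc{Vertex Geography}; getting an odd-length forced path with the required domination properties and no loops is the crux. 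Once a gadget with properties (i)--(iii) is in hand, the induction is essentially a transcription of the argument in Theorem~\ref{thm:ungeo}, and \textsc{pspace}-hardness follows since the construction is clearly polynomial in $|V(G)| + |A(G)|$.
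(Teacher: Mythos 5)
Your plan coincides exactly with the paper's strategy (reduction from normal \textsc{Vertex Geography} on directed graphs, weight-$1$ vertices $X_u$ to simulate vertex deletion, an oriented arc gadget with your properties (i)--(iii), and an induction on $|V(G)|$ transcribing the argument of Theorem~\ref{thm:ungeo}). However, the entire technical content of the theorem lies in the gadget you defer --- ``design the vertex gadget and the arc gadget explicitly \dots and verify'' is precisely the part that is not routine, and you correctly flag it as the crux without resolving it. This is a genuine gap, and it is not a small one: the paper itself shows that the obstruction is real, since the mis\`ere game is polynomial-time solvable both when the weight function is bounded by $1$ (Theorem~\ref{th : NGRM1}) and on bipartite graphs. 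Hence any gadget meeting (i)--(iii) must contain an odd cycle \emph{and} a vertex of weight at least $2$; a plain odd forced path of weight-$1$ vertices, which is the most natural first attempt suggested by your parity discussion, cannot give property (ii), because nothing then distinguishes the two directions of traversal.

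For comparison, the paper's gadget replacing an arc $(u,v)$ is the path $X_u$--$a_{uv}$--$b_{uv}$ followed by the triangle $b_{uv},c_{uv},d_{uv}$ and the edge $d_{uv}$--$X_v$, with all weights $1$ except $w(d_{uv})=2$. Forward traversal is the forced odd-length line $(X_u,0,a)$, $(a,0,b)$, $(b,0,c)$, $(c,0,d)$, $(d,0,X_v)$, which settles your parity concern, and the deviation $(b,0,d)$ loses. Orientation (your property (ii)) is obtained from the interplay of the triangle and the weight-$2$ vertex: if a player enters backwards via $(X_v,0,d_{uv})$, the opponent answers $(d,0,c)$ when $(a,0,X_u)$ is a losing move, and $(d,1,b)$ when it is winning, and a short case analysis (using that the token count $2$ on $d$ lets the defender pass back through $d$ exactly once) shows the entering player loses in all branches; property (iii) is handled by the observation that once $X_u$ has weight $0$ the status of $(a,0,X_u)$ is fixed independently of what happens inside the gadget. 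Without exhibiting such a gadget and carrying out this finite but delicate case analysis, your argument establishes only that the theorem would follow \emph{if} a gadget with properties (i)--(iii) exists, which is the statement to be proved rather than a proof of it.
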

\begin{proof}

We perform our reduction as follows. Let $G$ be a directed graph standing for an instance of \textsc{Vertex geography}. 
We construct an undirected graph $G'$ and a weight function $w_{G'}$ as follows:
\begin{itemize}
 \item If $u\in V(G)$, $X_u$ is a vertex of $G'$.
 \item If $u,v\in V(G)$ and $(u,v)\in E(G)$, then $a_{uv}$, $b_{uv}$, $c_{uv}$, $d_{uv}$ are vertices of $G'$.
 \item If $(u,v)\in E(G)$, then $(X_u,a_{uv})$, $(a_{uv},b_{uv})$, $(b_{uv},c_{uv})$, $(b_{uv},d_{uv})$, $(c_{uv},d_{uv})$ and $(d_{uv},X_v)$ are undirected edges of $G'$.
\item The weight function is defined by $w_{G'}(X_u)=1$ and (forgetting the index) $w_{G'}(a)=w_{G'}(b)=w_{G'}(c)=1$ and $w_{G'}(d)=2$. 
\end{itemize}
This means we replace all the arcs $(u,v)$ by the gadget of figure \ref{fig:diredgeNIMGRM}. 
\begin{figure}
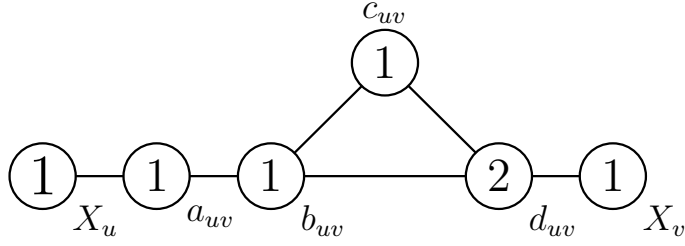

\begin{center}

\scalemath{1.5}{
\begin{graph}(5,1)(0,0)
\graphnodesize{0.6}
\graphnodecolour{1}
\autodistance{1}
\roundnode{u}(0,0)
\autonodetext{u}[se]{$\scalemath{0.75}{X_u}$}
\autonodetext{u}{$\scalemath{1.2}{1}$}
\roundnode{a}(1,0)
\autonodetext{a}[se]{$\scalemath{0.75}{a_{uv}}$}
\autonodetext{a}{$\scalemath{1}{1}$}
\roundnode{b}(2,0)
\autonodetext{b}[se]{$\scalemath{0.75}{b_{uv}}$}
\autonodetext{b}{$\scalemath{1}{1}$}
\roundnode{c}(3,1)
\autonodetext{c}[n]{$\scalemath{0.75}{c_{uv}}$}
\autonodetext{c}{$\scalemath{1}{1}$}
\roundnode{d}(4,0)
\autonodetext{d}[se]{$\scalemath{0.75}{d_{uv}}$}
\autonodetext{d}{$\scalemath{1}{2}$}
\roundnode{v}(5,0)
\autonodetext{v}[se]{$\scalemath{0.75}{X_v}$}
\autonodetext{v}{$\scalemath{1}{1}$}

\edge{u}{a}
\edge{a}{b}
\edge{b}{c}
\edge{b}{d}
\edge{c}{d}
\edge{d}{v}

\end{graph}
}
\vspace{0.6cm}
\caption{The arc gadget}\label{fig:diredgeNIMGRM}
\end{center}
\end{figure}

We show by induction on $|V(G)|$ that for each vertex $u\in V(G)$, $o^+((G,u))=o^-((G',X_u,w))$. 
If $|V(G)|=1$ then $o^+((G,u))=\Pc$. The graph $G'$ is also reduced to a unique vertex $X_0$. The first player has to finish the game by taking the only token on $X_u$.
Hence $o^-((G',X_u,w))=\Pc$.

Now assume $|V(G)|\geqslant 2$. First assume $o^+((G,u))=\Nc$. There is a vertex $v \in V(G)$ such that moving toward $v$ is winning. Let $\widehat{G}$
be the subgraph induced by $V(G)\setminus\{u\}$. We have $o^+((\widehat G,v))=\Pc$. We prove that the first player wins $(G',X_u,w_{G'})$ with the move $(X_u,0,a_{uv})$.
After such a move, the second player is forced to play $(a,0,b)$ and the first player answers with $(b,0,c)$. Once again the second player has no choice, he plays $(c,0,d)$.
The first player plays $(d,0,X_v)$, then the second player has to play in a graph $\tilde{G}$. 
Note that if the first player had played $(b,0,d)$, she would have lost. 
In other words the gadget works as an arc. The player who goes inside the gadget is not the one who goes outside. This shows that playing a move of the form $(X_w,0,a_{wu})$ 
is always losing  because your opponent will have to start one of his further move on $X_u$ and $w(X_u)$ is now equal to $0$. 
Playing a move of the form $(X_w,0,d_{uw})$ would also be losing as we prove in the second part.
Hence, we have $o^-((\widehat G ',X_v,w_{\widehat G '}))=o^-((\tilde G,X_v,w_{\tilde G }))$ and 
by induction hypothesis $o^+((\widehat G ,v))=o^-((\widehat G ',X_v,w_{\widehat G '}))=\Pc$. Therefore $(X_u,0,a)$ is winning and $o^-((G ',X_u,w_{G'} ))=\Nc$.

Reciprocally, assume that $o^-((G',X_u,w_{G'}))=\Nc$. There must exist a winning move. We first show that this move cannot be of the kind $(X_u,0,d_{vu})$. 
In other words, we show that our gadget is oriented. We note that the status of the move $(a_{vu},0,X_v)$ does not depend of the 
moves which will be played in the gadget before we reach $a_{vu}$. In fact, since $w(X_u)$ is now equal to $0$, the players will not be able to come back in the gadget after they get out of it.
We can therefore work case by case to show that $(X_u,0,d_{vu})$ is a losing move.

If the move $(a_{vu},0,X_v)$ is a losing move, the second player wins with the move $(d_{vu},0,c_{vu})$. In fact, the first player has to play $(c_{vu},0,b_{vu})$ and he answers with $(b_{vu},0,a_{vu})$. 
Finally, the first player has to play the losing move, $(a_{vu},0,X_v)$.

On the other hand, if the move $(a_{vu},0,X_v)$ is a winning move, the second player wins with the move $(d_{vu},1,b_{vu})$. 
There are now three possibilities for the first player.
She can answer with $(b_{vu},0,d_{vu})$. Then the second player plays $(d_{vu},0,c_{vu})$ and she has to play $(c_{vu},0,b_{vu})$. In that case, the second player wins since we are under mis\`ere convention and he is on a null weight vertex.
 If she chooses to play $(b_{vu},0,a_{vu})$, then the second player can play the winning move $(a_{vu},0,X_v)$. Finally,
 if she plays $(b_{vu},0,c_{vu})$, the second player answers with $(c_{vu},0,d_{vu})$ and she loses because she is now surrounded by null weight vertices.

Since there is no winning move of the kind $(X_u,0,d_{vu})$, there must be one winning move of the form $(X_u,0,a_{uz})$.
Let $\widehat{G}$ be the subgraph induced by $V(G)\setminus\{u\}$. We focus on the moves following $(X_u,0,a_{uz})$. The second player has no choice and plays $(a_{uz},0,b_{uz})$.
Then the first player has two choices. She can move to the vertex $d_{uz}$. But in this case, the second player will win with $(d_{uz},0,c_{uz})$. So we can assume she
rather plays to $c_{uz}$. Her opponent has no choice and move to $d_{uz}$. Once again she has two choices. The move $(d_{uz},1,X_z)$ is losing because the second player can answer
with $(X_{w},0,d_{uz})$. Hence we can suppose she plays $(d_{uz},0,X_z)$. Since there is no more token on $X_u$ and $X_z$, 
the second player has to play in a graph whose outcome is the same as $(\widehat G',X_z,w_{\widehat G'})$. The outcome of this game is $\Pc$ because $(X_u,0,a_{uz})$ is a winning move. 
By induction hypothesis, $o^+((\widehat G,z))=\Pc$. So moving to $z$ is a winning move in $G$ and $o^+((G,u))=\Nc$.
\end{proof}
As in the case of the normal version of \textsc{NimG-MR} (see \cite{Burke}), the reduction works even if we restrict ourselves to weight functions bounded by $2$. 
\begin{corollary}
 The mis\`ere version of \textsc{NimG-RM+NL} with weight function bounded by $2$ is \textsc{pspace}-complete. 
\end{corollary}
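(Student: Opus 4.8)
The plan is to combine two easy observations. For the hardness direction, I would revisit the reduction used in the proof of the preceding theorem: the undirected graph $G'$ it builds from a directed graph $G$ is equipped with the weight function $w_{G'}$ satisfying $w_{G'}(X_u)=1$ for every vertex $u\in V(G)$ and, for every arc $(u,v)\in E(G)$, $w_{G'}(a_{uv})=w_{G'}(b_{uv})=w_{G'}(c_{uv})=1$ and $w_{G'}(d_{uv})=2$. Hence every vertex of $G'$ carries at most two tokens, so the reduction from normal \textsc{Vertex Geography} on directed graphs (which is \textsc{pspace}-complete by \cite{Lichtenstein}) already produces instances lying inside the restricted class. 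Consequently the mis\`ere version of \textsc{NimG-RM+NL} with weight function bounded by $2$ is \textsc{pspace}-Hard, with nothing new to prove.

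It then remains to establish membership in \textsc{pspace}, and this is the only place the weight bound is genuinely used. The point is that the bound forces plays to be short: in \textsc{NimG-RM} every move strictly decreases the weight of the pointed vertex by at least one, so any play from a position $(H,x,w)$ has length at most $\sum_{y\in V(H)} w(y)$, which is at most $2\,|V(H)|$ when weights are bounded by $2$. A position is fully described by the current vertex together with the current weight function, which takes space polynomial in $|V(H)|$ once the weights are bounded by a constant. I would then compute the mis\`ere outcome by the standard depth-first evaluation of the game tree: from a position, enumerate the legal moves $(x,k,y)$ (there are at most $|V(H)|$ choices of neighbour times at most $w(x)\leqslant 2$ choices of $k$), recurse on each resulting position, declare the position an $\Nc$-position iff some child is a $\Pc$-position, and treat the terminal positions (pointed vertex of weight $0$) as $\Pc$-positions under mis\`ere convention. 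Since the recursion depth is bounded by the play length $2\,|V(H)|$ and each recursive call stores only the current position, the whole computation runs in polynomial space. Thus the problem is in \textsc{pspace}, and together with the hardness it is \textsc{pspace}-complete.

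I do not expect any real obstacle here; the only subtlety worth flagging is precisely why the statement restricts to bounded weights. If weights were allowed to be arbitrary integers given in binary, the total number of tokens — hence the length of a play and the depth of the recursion — could be exponential in the input size, and this naive argument would only yield an \textsc{exptime} bound. Bounding the weights (the constant $2$ being inherited for free from the hardness reduction) is exactly what makes plays polynomially long and places the decision problem in \textsc{pspace}.
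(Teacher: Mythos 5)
Your proposal is correct and follows essentially the same route as the paper: hardness is inherited directly from the preceding reduction, whose weight function already takes values in $\{1,2\}$, and membership in \textsc{pspace} follows because the bounded weights cap the game length at $2\,|V(G)|$ moves, allowing a polynomial-space game-tree evaluation. Your write-up merely makes explicit the depth-first search and the remark about binary-encoded weights, both of which the paper leaves implicit.
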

\begin{proof}
 In this case the length of a game never exceeds $2\times |V(G)|$ moves.
Hence the game is in \textsc{pspace}.
\end{proof}
As recalled in Corollary~\ref{cor:dirgeo} and \ref{cor:undirgeo}, \textsc{Vertex Geography} is \textsc{pspace}-complete even on planar directed graphs with maximum degree $3$.
Our gadget has both properties. Furthermore, the reduction does not increase the degree of the original graph vertices, so we get the following result.
\begin{corollary}
 The mis\`ere version of \textsc{NimG-RM+NL} with weight function bounded by $2$ is \textsc{pspace}-complete even when restricted to planar graphs with degree at most $3$. 
\end{corollary}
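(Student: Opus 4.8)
The plan is to reuse, with no change, the reduction $G \mapsto (G', w_{G'})$ of the preceding theorem and to check that, whenever the input directed graph $G$ already satisfies the restrictions under which normal \textsc{Vertex Geography} is known to be \textsc{pspace}-complete, the output $(G', X_u, w_{G'})$ satisfies the three restrictions in the statement. Concretely, recall that deciding the normal outcome of a \textsc{Vertex Geography} position on a planar directed graph of maximum degree $3$ is \textsc{pspace}-complete (the strengthening of \cite{Lichtenstein} recalled before Corollary~\ref{cor:dirgeo}). Since the preceding theorem establishes $o^+((G,u)) = o^-((G', X_u, w_{G'}))$ by a polynomial-time construction, it suffices to verify that this construction keeps the weight function bounded by $2$, keeps the maximum degree at most $3$, keeps planarity, and that the resulting \textsc{NimG-RM+NL} problem is in \textsc{pspace}.

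The weight bound is immediate from the definition of $w_{G'}$, which takes only the value $1$ (on each $X_u$, $a_{uv}$, $b_{uv}$, $c_{uv}$) and the value $2$ (on each $d_{uv}$); this is exactly the hypothesis used in the preceding corollary, so the same argument shows that every play has length at most $2\,|V(G')|$ and hence the problem is in \textsc{pspace}. For the degree bound, the auxiliary vertices of the gadget replacing an arc $(u,v)$ have degrees $\deg(a_{uv}) = 2$, $\deg(b_{uv}) = 3$, $\deg(c_{uv}) = 2$, $\deg(d_{uv}) = 3$ in $G'$, all at most $3$; and each $X_u$ is joined to $a_{uv}$ for every arc $(u,v)$ out of $u$ and to $d_{wu}$ for every arc $(w,u)$ into $u$, so $\deg_{G'}(X_u)$ equals the total (in- plus out-) degree of $u$ in $G$, which is at most $3$ by assumption. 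Thus $G'$ has maximum degree $3$.

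It remains to check planarity. Inside a single gadget (Figure~\ref{fig:diredgeNIMGRM}) the vertex $X_u$ has only the neighbour $a_{uv}$ and $X_v$ has only the neighbour $d_{uv}$, while the six edges among $a_{uv}, b_{uv}, c_{uv}, d_{uv}$ form a planar graph in which $X_u$ and $X_v$ can be placed on a common face; in other words each gadget behaves topologically like a subdivided edge between its two attachment vertices $X_u$ and $X_v$. Hence, fixing a planar embedding of $G$, one embeds $G'$ by drawing, in a thin tubular neighbourhood of each arc $(u,v)$, an embedded copy of the gadget with $X_u$ and $X_v$ at its ends; distinct gadgets meet only at shared $X$-vertices, so the result is a planar embedding of $G'$. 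Combining these three observations with the correctness $o^+((G,u)) = o^-((G',X_u,w_{G'}))$ of the reduction and the \textsc{pspace}-membership noted above yields that the misère version of \textsc{NimG-RM+NL} with weights bounded by $2$ is \textsc{pspace}-complete on planar graphs of maximum degree $3$.

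I do not expect a genuine obstacle here: the only delicate point is the planarity argument, namely confirming that the two attachment vertices of the gadget always lie on a common face so that the gadgets inserted along different arcs of a planar embedding never cross; this is visible directly from Figure~\ref{fig:diredgeNIMGRM}, where $X_u$ and $X_v$ sit on the outer boundary of the gadget. Everything else is a matter of reading off the degrees and weights from the construction.
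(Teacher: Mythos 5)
Your proposal is correct and follows essentially the same route as the paper: it reuses the reduction from normal \textsc{Vertex Geography} on planar directed graphs of maximum degree $3$, and observes that the gadget is planar with internal degrees at most $3$, that the reduction does not increase the degree of the original vertices, that the weights are $1$ or $2$, and that membership in \textsc{pspace} follows from the bounded game length as in the preceding corollary. The only difference is that you spell out the degree count and the planar embedding of the gadgets explicitly, which the paper leaves implicit.
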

The previous reduction raises up two questions. Does \textsc{NimG-RM+NL} remain \textsc{pspace}-Hard if we bound the weight function by $1$? And is it still \textsc{pspace}-Hard
if we only consider bipartite graphs? 
In fact, the odd cycle and the vertex with two tokens seem essential to perform our reduction to \textsc{NimG-RM+NL}. 
The two results below show that they are really necessary.

\begin{theorem}\label{th : NGRM1}
 Let $G$ be a graph and $w$ its weight function. If $w$ is constant equal to $1$, we can compute $o^-((G,u,w))$ and find an effective winning strategy in time
 $\mathcal{O}((\sqrt{(|V(G)|}.|E(G)|)$.
\end{theorem}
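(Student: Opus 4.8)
The plan is to reduce the mis\`ere version of \textsc{NimG-RM} with all weights equal to $1$ to a purely structural question about the underlying graph, and then solve that structural question with a maximum matching computation. The crucial observation is that when $w\equiv 1$, a turn consists simply of: start on the current vertex $u$ (which must have weight $1$, else you have already lost/won), remove its single token, and slide the pointer to a neighbour $v$. From that point on the vertex $u$ has weight $0$, so nobody may ever start a turn there again, and in particular the pointer can never return to $u$. Hence a play of the game from $u$ is exactly a \emph{simple path} in $G$ starting at $u$, extended one edge per move, and a player loses (in mis\`ere) precisely when it is their turn and the current vertex has no unused neighbour, i.e. when the path cannot be prolonged. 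So the whole game is equivalent to the following: two players alternately extend a simple path from $u$; the player who \emph{cannot} move wins. This is exactly the mis\`ere form of the \textsc{Undirected Vertex Geography} played on $G$ from $u$ (deleting a visited vertex is the same as forbidding the pointer from reusing it), and the normal form of that game is the classical Fraenkel--Scheinerman--Ullman matching characterisation.

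So the main work is to establish the mis\`ere analogue of the matching criterion. First I would recall (or reprove in one line by the same alternating-path argument) the normal-play fact: the first player wins normal \textsc{Vertex Geography} on $(G,u)$ iff every maximum matching of $G$ saturates $u$; equivalently, the second player wins iff there is a maximum matching missing $u$. For the mis\`ere version I would prove the clean statement: \emph{the first player wins $o^-((G,u,1))$ iff $G$ has a maximum matching $M$ such that $M$ does \emph{not} saturate $u$, OR $u$ is isolated}; more carefully one has to treat the parity of the longest forced path. The right way to pin it down is: consider a maximum matching $M$ of $G$. If $u$ is unsaturated by some maximum matching, the first mover is in trouble in normal play but in mis\`ere play she wants to be the one who gets stuck, so the analysis flips. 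The clean claim I expect to hold is
\[
o^-((G,u,\mathbf 1)) = \Pc \iff \text{every maximum matching of }G\text{ saturates }u,
\]
i.e. exactly the opposite of the normal-play outcome whenever $u$ is not isolated, and $\Pc$ when $u$ is isolated. The proof is the standard strategy-stealing/augmenting-path argument: if every maximum matching saturates $u$, the player \emph{not} to move maintains the invariant ``the current vertex is matched in a fixed maximum matching $M$ restricted to the unused vertices'', always replies along the matching edge, and forces the opponent to be the one who eventually cannot move; conversely, if some maximum matching misses $u$, the first player picks the $M$-edge out of her first non-$u$ move... — one has to be slightly careful about who moves first, but it is the familiar alternating-path bookkeeping, the only subtlety being that in mis\`ere convention ``running out of moves'' is good rather than bad, which simply swaps the roles of the two players relative to the normal-play proof.

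Finally, for the complexity bound: computing a maximum matching of $G$ takes $\mathcal{O}(\sqrt{|V(G)|}\,|E(G)|)$ by the Hopcroft--Karp / Micali--Vazirani algorithm, and from a single maximum matching one checks in linear time whether $u$ is saturated, and whether $u$ can be made unsaturated by testing for an $M$-alternating path from $u$ to another $M$-exposed vertex (again linear). This yields the outcome in the stated time. For the \emph{effective winning strategy}: the winning player stores the fixed maximum matching $M$ (or a maximum matching avoiding $u$, as appropriate) and, at each of her turns, moves along the $M$-edge incident to the current vertex; after the opponent moves off the matched endpoint to some vertex $x$, either $x$ is $M$-exposed — impossible by the invariant, handled by the augmenting-path maximality argument — or $x$ is $M$-matched and she again follows the $M$-edge; so each of her moves costs only $\mathcal{O}(1)$ lookup plus the one-time $\mathcal{O}(\sqrt{|V(G)|}\,|E(G)|)$ preprocessing. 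The step I expect to be the real obstacle is getting the mis\`ere matching characterisation exactly right at the boundary cases (isolated $u$, and the parity of the number of moves), and writing the invariant so that it correctly tracks ``maximum matching of the graph induced on the not-yet-visited vertices'' as the path grows; once the invariant is stated correctly, maintaining it is routine via the fact that deleting a matched endpoint and its matched partner from a maximum matching leaves a maximum matching of the remaining graph along any alternating path.
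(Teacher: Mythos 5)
Your translation of the game into a path-extension game is right up to its last step, and that last step is where the proof breaks. Under mis\`ere \textsc{NimG-RM} with $w\equiv 1$, a player whose current vertex has only already-visited (null-weight) neighbours must still remove the token and move the pointer onto a null-weight vertex, whereupon the \emph{opponent} begins a turn on a null-weight vertex and wins. So, exactly as you first write, the player who cannot prolong the simple path \emph{loses}; this is the \emph{normal}-play convention, i.e.\ the game is normal \textsc{Vertex Geography} on the undirected graph, not its mis\`ere form. Your very next sentence (``the player who cannot move wins'') flips this, and the flip propagates: you then set out to prove a ``mis\`ere analogue'' of the Fraenkel--Scheinerman--Ullman criterion and arrive at the claim that $o^-((G,u,\mathbf 1))=\Pc$ iff every maximum matching saturates $u$, which is the negation of the truth. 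Concretely, take $G=K_2$ with vertices $u,v$ and one token on each: the first player removes the token at $u$ and moves to $v$; the second player is forced to move back onto the emptied $u$; the first player then starts a turn on a null-weight vertex and wins, so $o^-=\Nc$, although every maximum matching of $K_2$ saturates $u$, so your criterion would give $\Pc$. The correct statement ($o^-=\Nc$ iff every maximum matching covers $u$) is also what the paper's subsequent bipartite theorem establishes in the more general weighted setting.

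The paper's own proof is precisely the identification you almost made: mis\`ere \textsc{NimG-RM+NL} with unit weights \emph{is} normal \textsc{Vertex Geography} on undirected graphs, which is already known to be solvable (outcome and effective strategy, via the maximum-matching characterisation and an $\mathcal{O}(\sqrt{|V(G)|}\,|E(G)|)$ matching algorithm), so no new mis\`ere matching lemma is needed. Your complexity and strategy bookkeeping (compute one maximum matching, follow matching edges, test saturation of $u$ by an alternating path) would be fine once the convention is corrected, but as written the characterisation you plan to prove is false, so the argument cannot be completed in the stated form.
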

\begin{proof}
If we allow only one token by vertices, the mis\`ere version of \textsc{NimG-RM+NL} is exactly the same as the normal version of \textsc{Vertex Geography} on undirected graphs. 
As recalled in the table of Section~\ref{sectionGeo}, this game is solvable in polynomial time. Therefore, with only one token allowed, the mis\`ere version of \textsc{NimG-RM+NL} is also solvable in polynomial time. 
\end{proof}
In the next theorem and its corollary, we prove that the problem is also solvable in polynomial time when we play on bipartite graphs. We will assume there is no null weight vertex 
at the beginning. 
It is not really a restriction, since in mis\`ere version, the outcome of the game is the same if we played on a graph $G$ or on the subgraph of $G$ induced by the vertices 
with at least one token.  

\begin{theorem}
 Let $G$ be a bipartite graph and $w$ a strictly positive weight function. The position $(G,u,w)$ of \textsc{NimG-RM+NL} is winning in mis\`ere convention if and only if all the maximum
matchings of $G$ cover the starting vertex $u$.
\end{theorem}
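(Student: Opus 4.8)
I would prove this by relating mis\`ere \textsc{NimG-RM+NL} on a bipartite graph with all weights $\geq 1$ directly to the structure of maximum matchings, exploiting the classical theorem (going back to the analysis of \textsc{Undirected Vertex Geography} by Fraenkel, Scheinerman and Ullman) that in \textsc{UVG} the first player wins from $u$ iff every maximum matching saturates $u$. The key reduction idea: on a bipartite graph with strictly positive weights, a vertex with weight $k\geq 2$ behaves almost like a vertex of weight $1$, because the losing condition is ``start your turn on a null-weight vertex'' and in \textsc{NimG-RM} a player who re-enters a previously visited vertex can always take it down to $0$ himself. The main work is to show that the extra tokens do not change the outcome: a player never benefits from leaving tokens behind on a vertex she is leaving, and a player never benefits from revisiting a vertex.

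\medskip

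\noindent\textbf{Step 1: reduce to the weight-one case.} I would first argue that for a bipartite $G$ with $w\geq 1$, the mis\`ere outcome of $(G,u,w)$ equals the mis\`ere outcome of $(G,u,\mathbf 1)$, where $\mathbf 1$ is the all-ones weight function. One direction is a strategy-stealing/pairing argument: the player who has a winning strategy on $(G,u,\mathbf 1)$ can simulate it on $(G,u,w)$ by always removing tokens down to $0$ when leaving a vertex; the subtlety is the opponent, who may leave tokens on a vertex $x$ and later come back to $x$. Here bipartiteness is crucial — revisiting $x$ means traversing a closed walk of even length, and one shows (via an alternating-path exchange on the maximum matching) that re-entering a vertex can never help the player to move: if the current configuration of ``used up'' vertices corresponds to a maximum matching covering the current vertex, the mover is winning and would only spoil this by looping. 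So revisiting a positively-weighted vertex is at best neutral and can be assumed away, collapsing the game to the weight-one case. By Theorem~\ref{th : NGRM1}, the weight-one case is exactly \textsc{Undirected Vertex Geography} on $G$ started at $u$.

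\medskip

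\noindent\textbf{Step 2: invoke the matching characterization for \textsc{UVG}.} Once we are reduced to \textsc{UVG} on the undirected bipartite (indeed, arbitrary) graph $G$ from $u$, I would cite the classical result: the first player wins iff every maximum matching of $G$ covers $u$. Concretely, if some maximum matching $M$ misses $u$, the second player's strategy is ``always move along the matching edge,'' which is well-defined and never gets stuck (a stuck position would give an $M$-augmenting path); if every maximum matching covers $u$, fix one $M$, first player moves along $M$ from $u$, and thereafter the second player is forced onto a vertex that, together with the trail so far, would yield an $M$-augmenting path if the second player could not be answered by another $M$-edge — so the first player always has a reply. This is the heart of the argument, but it is a known theorem I may assume from the table in Section~\ref{sectionGeo}.

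\medskip

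\noindent\textbf{Step 3: the complexity bound.} Finally, computing whether all maximum matchings of a bipartite $G$ cover $u$ reduces to: compute a maximum matching (Hopcroft–Karp, $\mathcal O(\sqrt{|V(G)|}\cdot|E(G)|)$), then check whether $G-u$ has a maximum matching of the same size, equivalently whether $u$ lies in some maximum matching — a single augmenting-path search, $\mathcal O(|E(G)|)$. Extracting an effective winning strategy is immediate from the proof in Step 2 (follow the fixed matching $M$), and the matching itself is produced within the stated time bound. I would state the corollary that this gives an $\mathcal O(\sqrt{|V(G)|}\cdot|E(G)|)$ algorithm.

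\medskip

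\noindent\textbf{Main obstacle.} The delicate point is Step 1 — proving rigorously that surplus tokens on a bipartite graph are irrelevant in mis\`ere play. The natural induction is on $\sum_v (w(v)-1)$, reducing one vertex at a time from weight $k\geq 2$ to weight $k-1$, and showing the mis\`ere outcome is unchanged. Bipartiteness must be used precisely where one argues that a player gains nothing by creating a revisit opportunity; the cleanest route is to track, at each position, the partial trail together with the set of exhausted vertices, and to show this set always forms (the saturated side of) a matching-respecting structure, so that any revisit corresponds to a suboptimal, ``augmenting-path-destroying'' move. I expect this bookkeeping, rather than the appeal to the \textsc{UVG} theorem, to be where the real care is needed.
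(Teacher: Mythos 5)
Your overall architecture (collapse the weights to $1$, then invoke the matching criterion for \textsc{Vertex Geography} on undirected graphs) can be made to work, but Step 1 --- which you yourself flag as the heart of the matter --- is not proved, and the justification you sketch for it is the wrong one. The sentence ``if the current configuration of used-up vertices corresponds to a maximum matching covering the current vertex, the mover is winning and would only spoil this by looping'' evaluates mid-game positions (with arbitrary leftover tokens and null-weight vertices) by the very matching criterion you are trying to establish, so the argument is circular; the alternative induction on $\sum_v(w(v)-1)$ faces the same obstacle, since the intermediate positions it must compare are no longer of the form covered by the statement (strictly positive weights). What actually makes surplus tokens harmless is a much more elementary consequence of bipartiteness, which your sketch never states: with $u\in L$, the player moving from $L$ only ever removes tokens from $L$-vertices and her opponent only from $R$-vertices. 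Hence if the simulating player always empties the vertex she occupies, every previously visited vertex on \emph{her} side has weight $0$; the only moves that re-enter her side are made by the opponent, and such a re-entry is an immediately losing move in mis\`ere play, while she herself never revisits anything because the undirected \textsc{Vertex Geography} strategy she is copying only moves to fresh vertices. With this observation the simulation goes through for whichever player wins $(G,u,\mathbf{1})$, Step 1 holds, and Theorem~\ref{th : NGRM1} together with the criterion of Fraenkel, Scheinerman and Ullman \cite{FraenkelGeo} finishes the proof; without it, ``revisiting is at best neutral and can be assumed away'' is an unproven assertion.

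For comparison, the paper does not factor the proof through a weight-collapsing lemma: it plays the matching strategy directly on the weighted game (take all tokens from the current vertex, then move along an edge of a maximum matching $M$ covering $u$; symmetrically for the second player when some maximum matching misses $u$), and proves correctness by exactly the $L$/$R$ token-removal split above, so that the traversed edges form a path and a position where the strategy would fail yields either a maximum matching avoiding $u$ or an $M$-augmenting path. The two routes rest on the same two ingredients; yours separates them into ``reduce to the weight-one game'' plus ``cite the known criterion'', which is acceptable once Step 1 is argued as indicated, while the paper's single pairing strategy disposes of the surplus tokens and the matching argument simultaneously. Your Step 3 reproduces the paper's corollary and is fine, but it is not part of the stated theorem.
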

\begin{proof}
 Since $G$ is bipartite, we can split $V(G)$ into two disconnected subsets $L$ and $R$ such that $u\in L$. Note that the first player will always remove 
tokens from vertices of $L$ whereas his opponent will remove tokens from $R$. 
 
 We now assume that all the maximum matchings of $G$ cover $u$. Let $M$ be such a matching.
We show that removing all the heaps on the current vertex and then moving along an edge of $M$ is a winning strategy for the first player.
We look at the first time she cannot follow the above strategy. There are two possibilities. Firstly, she is on a vertex with no token on it. In this case,
she wins and the strategy is indeed a winning one. The second possibility is that she is on a vertex not visited before but uncovered by $M$. We show
this possibility never happens.
In that case we have a list of edges $(f_1,s_1,...,f_n,s_n)$ such that the $f_i$ stands for the edges followed by the first player and 
the $s_i$ the edges followed by the second one.
Since the game is not already ended, the $s_i$ have no vertex in $L$ in common. Otherwise, the first player would have been on a null weight vertex. 
Hence the $f_i$ which are in $M$ are all distinct and they have no vertex in common either. In other words $(f_1,s_1,...,f_n,s_n)$ is a path. So 
$(M\cup\{s_1,...,s_n\})\setminus\{f_1,...,f_n\}$ is a maximum matching which does not cover $u$. This is in contradiction with our hypothesis. 

Reciprocally, assume there is a maximum matching $M$ which does not cover $u$. Let $v$ be the first vertex toward which the first player moves. Since
$M$ is maximum, $v$ is covered by $M$. Hence the second player can follow the same strategy as we saw above. Showing that this strategy is winning for him can be done as 
before. The case where he is stuck on a vertex uncovered by $M$ will not appear either. In this case, it actually leads to an augmenting path 
$(f_1,s_1,...,f_n,s_n,f_{n+1})$, which would contradict the maximality of $M$.
\end{proof}

\begin{corollary}
 The mis\`ere version of the game \textsc{NimG-RM+NL} is solvable in polynomial time on bipartite graph. Computing $o^-((G,u,w))$ and finding an effective winning strategy can be done in time 
$\mathcal{O}(\sqrt{|V(G)|}.|E(G)|)$.
\end{corollary}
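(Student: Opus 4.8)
The final statement to prove is the corollary: the misère version of \textsc{NimG-RM+NL} is solvable in polynomial time on bipartite graphs, with the running time $\mathcal{O}(\sqrt{|V(G)|}\cdot|E(G)|)$ for computing the outcome and an effective winning strategy.

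\medskip

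The plan is to reduce everything to matching computations and invoke the preceding theorem, which characterizes the misère outcome of $(G,u,w)$ on a bipartite graph: the position is winning if and only if every maximum matching of $G$ covers $u$. So the algorithmic task splits into two parts. First I would compute a maximum matching $M$ of $G$ using the Hopcroft--Karp algorithm, which runs in time $\mathcal{O}(\sqrt{|V(G)|}\cdot|E(G)|)$ on a bipartite graph; this is where the stated bound comes from. Second, I need to decide whether \emph{every} maximum matching covers $u$, equivalently whether there is \emph{some} maximum matching missing $u$. The standard fact here is that, given one maximum matching $M$, the vertex $u$ is avoided by some maximum matching if and only if $u$ is itself $M$-unsaturated, or there is an $M$-alternating path from an $M$-unsaturated vertex to $u$ (on the appropriate side of the bipartition); flipping such a path produces a maximum matching omitting $u$. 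This alternating-path search is a single BFS/DFS in the residual-type graph, costing only $\mathcal{O}(|V(G)|+|E(G)|)$, so it is dominated by the Hopcroft--Karp step. Combining the two gives the outcome in the claimed time.

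\medskip

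For the ``effective winning strategy'' part, I would appeal directly to the strategy exhibited in the proof of the previous theorem: if every maximum matching covers $u$, fix one maximum matching $M$ (already computed), and the first player's winning strategy is simply ``empty the current vertex and move along the $M$-edge incident to it''; if some maximum matching $M$ misses $u$, the same rule relative to that $M$ is a winning strategy for the second player (after the first player's opening move to a vertex $v$, which $M$ must cover). In both cases the strategy is completely determined by the one precomputed matching and requires only constant work per move to consult, so producing it costs no more than the matching computation itself. Hence the whole procedure runs in $\mathcal{O}(\sqrt{|V(G)|}\cdot|E(G)|)$.

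\medskip

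The only mild subtlety — and the step I would be most careful about — is the reduction to the ``no null-weight vertices'' hypothesis assumed in the theorem: if $w$ has some zero vertices, one first restricts to the subgraph induced by the vertices carrying at least one token, which as noted in the text does not change the misère outcome, and this restriction is linear-time. After that the characterization applies verbatim and the rest is the routine matching machinery described above. I do not expect any genuine obstacle; the content is entirely in citing Hopcroft--Karp for the time bound and the classical alternating-path criterion for ``$u$ lies in every maximum matching,'' both standard.
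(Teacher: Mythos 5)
Your proposal is correct, and it reaches the same bound via a slightly different algorithmic route than the paper. The paper runs the $\mathcal{O}(\sqrt{|V(G)|}\cdot|E(G)|)$ bipartite matching algorithm \emph{twice}: once on $G$ and once on the induced subgraph $G\setminus\{u\}$, and compares the two matching cardinalities $C$ and $C'$ — equality certifies a maximum matching of $G$ avoiding $u$ (namely the one found in $G\setminus\{u\}$, which also serves as the second player's strategy matching), while $C'<C$ certifies that every maximum matching covers $u$, and the matching of $G$ already computed serves as the first player's strategy. You instead compute a single maximum matching $M$ of $G$ and then decide essentiality of $u$ by the classical alternating-path criterion (an $M$-alternating path of even length from an exposed vertex to $u$, i.e.\ with the exposed vertex on the same side as $u$), flipping the path when it exists to exhibit a maximum matching missing $u$; this extra step is a linear-time reachability search, so it is dominated by the matching computation. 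Both arguments lean entirely on the preceding theorem's characterization and on the strategy exhibited in its proof, and both handle zero-weight vertices by restricting to the positive-weight induced subgraph, as the paper notes before the theorem. Your variant saves one matching computation (a constant factor only) but requires correctly stating and using the alternating-path characterization of vertices missed by some maximum matching, which you do; the paper's version avoids that lemma at the cost of a second run of the algorithm. Incidentally, your attribution of the $\mathcal{O}(\sqrt{|V(G)|}\cdot|E(G)|)$ bound to Hopcroft--Karp is the more standard citation than the paper's ``Edmond-Karp''.
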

\begin{proof}
 Let $G'$ be the subgraph induced by $V(G)\setminus\{u\}$. We compute $C$ the cardinal of a maximum matching of $G$, then we compute $C'$ the cardinal of a maximum matching of $G'$.
Both of these operations can be done in time $\mathcal{O}(\sqrt{|V(G)|}.|E(G)|)$, using the Edmond-Karp's algorithm. If $C=C'$, there is a maximum matching of $G$ which does not cover $u$, then
$o^-((G,u,w))=\Pc$. On the contrary, if $C'<C$, all the maximum matchings of $G$ cover $u$ and $o^-((G,u,w))=\Nc$. Moreover, the Edmond-Karp's algorithm gives us a 
maximum matching of $G$ covering $u$. The effective winning strategy is as follows: take all the tokens on the current vertex, then move along the edge of the matching. 
\end{proof}
To finish with the game \textsc{NimG-RM}, we give an algorithm for the mis\`ere version of \textsc{NimG-RM+L}. One more time, we suppose there is no null weight vertex at the beginning. 
We already saw it does not matter.
\begin{theorem}
 Let $G$ be a graph with a loop on all its vertices and $w$ its weight function. Computing $o^-((G,u,w))$ and finding an effective winning strategy can be done in time 
$\mathcal{O}(\sqrt{|V(G)|}.|E(G)|)$.
\end{theorem}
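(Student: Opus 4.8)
The plan is to reduce the mis\`ere version of \textsc{NimG-RM+L} to the normal version of \textsc{Vertex Geography} on an auxiliary \emph{undirected simple} graph, and then to invoke the polynomial‑time (matching‑based) algorithm of Fraenkel, Scheinerman and Ullman. As before, we may first assume $w$ is strictly positive, since deleting the null‑weight vertices changes neither the relevant graph data nor the mis\`ere outcome.

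The key preliminary observation is a strategy‑stealing lemma: \emph{if $w(u)\geq 2$ then $o^-((G,u,w))=\Nc$}. Let the first player play the idle move $(u,1,u)$ along the loop on $u$ (legal since $w(u)\geq 2$). If the resulting position is a $\Pc$‑position we are done. Otherwise the second player has a winning reply; it can neither loop on the now‑empty $u$ nor go to an already‑empty vertex (those are losing in mis\`ere play), so it is of the form $(u,0,s)$ with $s$ a neighbour of $u$ of positive weight. But the first player could have reached that very position directly from $(G,u,w)$ by playing $(u,0,s)$, so $(u,0,s)$ is winning and $o^-((G,u,w))=\Nc$. Two corollaries: moving the token onto a vertex of weight $\geq 2$, or onto a vertex of weight $0$, is always a losing move; hence from a weight‑$1$ vertex the only non‑losing option is to empty it and move to a weight‑$1$ neighbour, and from a vertex of weight $\geq 2$ one additionally has idle‑type options.

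Now let $\bar G$ denote $G$ with all loops deleted, and let $\bar G^{+}$ denote $\bar G$ with one pendant leaf $\ell_v$ attached to each vertex $v$ with $w(v)\geq 2$. I claim $o^-((G,u,w))$ equals the normal outcome $o^+$ of \textsc{Vertex Geography} on $\bar G^{+}$ with the token on $u$; equivalently, by Fraenkel--Scheinerman--Ullman, $o^-((G,u,w))=\Nc$ if and only if every maximum matching of $\bar G^{+}$ covers $u$. I would prove the claim by induction on $\sum_{v\in V(G)}w(v)$, the correspondent of a position with some null weights being \textsc{Vertex Geography} on the subgraph of $\bar G^{+}$ induced by the still‑positive vertices and their pendants. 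If $w(u)\geq 2$ then both sides are $\Nc$: the \textsc{NimG} side by the lemma, and the \textsc{Geography} side because sliding the token to $\ell_u$ leaves the opponent on a sink. If $w(u)=1$, the first \textsc{NimG} player must empty $u$ and move; her only non‑losing moves go to a weight‑$1$ neighbour $s$, and the position $(G,s,w')$ so reached has as correspondent exactly $\bar G^{+}$ with $u$ (and the now‑isolated $\ell_u$, if any) removed and the token on $s$ — precisely the position obtained from $(\bar G^{+},u)$ by the \textsc{Geography} move to $s$. Since $\sum w'<\sum w$ these followers correspond by induction, and the weight‑$1$ neighbours of $u$ are also exactly the non‑losing \textsc{Geography} moves from $(\bar G^{+},u)$ (sliding to a heavy neighbour is a loss on both sides, and if $u$ has no weight‑$1$ neighbour both positions are $\Pc$). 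Hence the two positions have equal outcome.

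It remains to note that $\bar G^{+}$ has $\mathcal{O}(|V(G)|)$ vertices and $\mathcal{O}(|E(G)|)$ edges (counting each loop of $G$ as an edge, there is one deleted loop per vertex and at most one added pendant per vertex), so running the Fraenkel--Scheinerman--Ullman algorithm on $\bar G^{+}$ with the fast maximum‑matching routine used in the previous corollary computes $o^-((G,u,w))$ and a winning strategy in time $\mathcal{O}(\sqrt{|V(G)|}.|E(G)|)$; the \textsc{Geography} strategy ``follow a fixed maximum matching'' is read back into \textsc{NimG-RM+L} by interpreting a matching edge $(p,s)$ with $s$ a genuine vertex as the move $(p,0,s)$, and a matching edge $(p,\ell_p)$ as an idle move at $p$ (a case governed by the proof of the lemma). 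I expect the delicate point to be exactly this bookkeeping: making precise, both in the induction and in the strategy translation, how an idle move at a heavy vertex and the emptying (rather than deletion) of a vertex line up with vertex‑deletion in \textsc{Vertex Geography}, and checking that the translated strategy stays winning across idle moves, where the associated graph loses a pendant rather than a vertex.
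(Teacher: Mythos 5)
Your outcome characterization is essentially sound and takes a genuinely different route from the paper: the paper first shows (via the loop) that any position with $w(u)\geq 2$ is $\Nc$, and for $w(u)=1$ restricts to the connected component $C_u$ of weight-$1$ vertices and invokes the weight-$1$ theorem, whereas you prove by induction a single unified criterion, namely that $o^-((G,u,w))$ equals the normal \textsc{Vertex Geography} outcome on the auxiliary graph $\bar G^{+}$ (loops deleted, a pendant at each heavy vertex). Your strategy-stealing lemma and the induction (losing moves to null-weight or heavy vertices discarded on both sides, moves to weight-$1$ neighbours matched up) are correct. Two small points: $\bar G^{+}$ need not be bipartite, so the ``routine used in the previous corollary'' (Edmonds--Karp on a bipartite graph) does not apply; you need a general-graph $\mathcal{O}(\sqrt{|V|}\,|E|)$ matching algorithm, which is also what the paper implicitly uses through Fraenkel--Scheinerman--Ullman, so this is only an imprecision.

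The genuine gap is the strategy extraction, exactly at the point you flag: ``follow a fixed maximum matching of $\bar G^{+}$, reading a pendant edge as an idle move'' is not a winning strategy, because a \textsc{Geography} move to $\ell_p$ deletes $p$ and strands the opponent on a sink, while the corresponding idle move in \textsc{NimG-RM+L} leaves the opponent sitting at $p$ with all of $p$'s exits still available; outcomes transfer, moves do not. Concretely, take the path $p-x-h-b$ with loops everywhere and weights $w(p)=w(h)=2$, $w(x)=w(b)=1$, token at $p$. Then $\bar G^{+}$ is the path $\ell_p-p-x-h-b$ with a pendant $\ell_h$ at $h$, and $M=\{(p,x),(h,\ell_h)\}$ is a maximum matching. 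Your strategy plays the (correct) move $(p,0,x)$; the opponent answers $(x,0,h)$; now $M(h)=\ell_h$ tells you to idle, i.e.\ play $(h,1,h)$, but the opponent replies $(h,0,b)$ and from $b$ you can only move onto null-weight vertices, so you lose --- even though the position at $h$ was a win via $(h,0,b)$. Moreover the matchings containing $(\ell_p,p)$ prescribe an idle at $p$ as the first move, which already loses to $(p,0,x)$; only the lucky matching $\{(p,x),(h,b)\}$ happens to work, and your proposal gives no way to single it out. So proving the ``effective winning strategy'' half requires genuinely more than one static matching: either the paper's case split (at a heavy vertex, test whether some move to a weight-$1$ neighbour $v$ leaves a light component whose maximum matchings do not all cover $v$, and only idle when no such move exists; at a light vertex, play the matching strategy inside the current light component), or some equivalent recomputation on the current residual graph after each idle.
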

\begin{proof}
 Let $T$ be the subset of $V(G)$ defined by $T=\{u\in V(G)~|~w(u)\geq 2\}$. We show that for any $u$ in $T$, $(G,u,w)$ is a winning position. 
Assume there is a winning move of the form $(u,0,v)$ with $v\not = u$. In this case $(G,u,w)$ is clearly winning. If all the moves of this kind are losing moves, 
the first player decreases $w(u)$ to $1$ and then uses the loop to stay on $u$. His opponent will have to play one of the losing moves $(u,0,v)$, so $(G,u,w)$ is also 
winning in this case.

Now, let $u$ be a vertex with $w(u)=1$ and let $C_u$ be the connected component of $G\setminus T$ which contains $u$. Since moving outside $C_u$ is always a losing move,
we have $o^-((G,u,w))=o^-((C_u,u,w))$. Using theorem \ref{th : NGRM1} we can compute $o^-((C_u,u,w)$ in the expected time.
\end{proof}

We conclude this section with the result for \textsc{NimG-MR} in mis\`ere convention. 
For our result, the loops do not matter, so we forget the \textsc{+L} and \textsc{+NL}. Burke and George 
only proved in \cite{Burke} that \textsc{NimG-MR+L} is \textsc{pspace}-Hard in normal convention. 
Carefully looking at their proof, it turns out that it works the same for \textsc{NimG-MR+NL} in normal convention. To get our result, we reduce the normal version 
of \textsc{NimG-MR} to its mis\`ere version.
\begin{theorem}
 The game \textsc{NimG-MR+L} and the game \textsc{NimG-MR+NL} are \textsc{pspace}-Hard in mis\`ere convention.
\end{theorem}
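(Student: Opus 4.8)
The plan is to reduce from the normal version of \textsc{NimG-MR}, which is \textsc{pspace}-Hard by Burke and George~\cite{Burke}; as remarked just above, their construction also gives \textsc{pspace}-Hardness of normal \textsc{NimG-MR+NL}. So it suffices, separately for each of the two loop conventions, to turn a normal \textsc{NimG-MR} instance $(G,u,w)$ into a \textsc{NimG-MR} instance $(G',X_u,w')$ of the same convention whose mis\`ere outcome equals the normal outcome of $(G,u,w)$. As in the proof of Theorem~\ref{thm:dirgeo}, the idea is to graft onto every vertex of $G$ a small gadget offering one extra ``poisoned'' move: a move that is available throughout the play but that a player declines as long as an original move remains possible, and that, once taken, forces the mover to lose under mis\`ere convention. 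Since the gadget is attached to every vertex, exactly when the original game would end (the pointer sits on a vertex all of whose original neighbours are empty, so the player to move loses in normal play) the only move left in $G'$ enters a gadget, and the forced play inside it makes that same player lose in mis\`ere play. This flips the terminal outcome once and changes nothing else.

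For \textsc{NimG-MR+L} I would take the gadget to be a single new vertex $z_x$ with one token and a loop, joined to $x$ by an edge, so $G'$ is again looped everywhere; after a player plays $x\to z_x$ and is forced to empty $z_x$, the opponent is on $z_x$ with no legal move inside the gadget and must return to $x$, and the ensuing back-and-forth is governed by the key lemma below. For \textsc{NimG-MR+NL} the same role is played by a constant-size loopless gadget (a short path, or a small cycle, hung at $x$) engineered so that the forced oscillation between it and $x$ terminates with the player who entered it stranded on an empty vertex. In both cases the reduction adds only $\mathcal{O}(|V(G)|)$ vertices and edges and leaves the weights bounded, so it is polynomial; since the game length is then linear in the number of vertices, one also gets \textsc{pspace}-completeness in the bounded-weight case, paralleling the normal situation.

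The heart of the argument is a lemma: in any position of $G'$ reachable by legal play, moving the pointer into a gadget attached to the current vertex is a losing move under mis\`ere convention — equivalently, optimal play never enters a gadget unless it is the only legal move, in which case the game is over and the player to move loses. Granting this, I would prove $o^-(G',X_u,w')=o^+(G,u,w)$ by induction on $|V(G)|$, exactly along the lines of Theorem~\ref{thm:dirgeo}. The base case $|V(G)|=1$ is immediate. If $(G,u)$ is an $\Nc$-position, take a normal winning move $u\to v$ and mirror it in $G'$; since the pointer has left $u$, the gadget hung on $u$ becomes unreachable, so the two game trees coincide, and the induction hypothesis together with the lemma show the mirrored move wins. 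If $(G,u)$ is a $\Pc$-position, every original move from $u$ leads by induction to a mis\`ere $\Nc$-position of $G'$, while the only other move enters the gadget on $u$ and loses by the lemma, so $(G',X_u)$ is a mis\`ere $\Pc$-position. This yields \textsc{pspace}-Hardness for both \textsc{NimG-MR+L} and \textsc{NimG-MR+NL} in mis\`ere convention.

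The step I expect to be the real obstacle is proving the key lemma in the loopless case. Unlike in Theorem~\ref{thm:dirgeo}, where the extra neighbour is a genuine sink and the poisoned move ends the game at once, in an undirected graph one cannot attach a sink, so play bounces between the gadget and the vertex $x$ it is attached to, and the weight at which $x$ was last left interferes with the bookkeeping. The loopless gadget must therefore be designed — and a finite case analysis carried out — so that this oscillation always halts with the intended player on an empty vertex, and so that the ubiquitous gadget moves never hand either player a useful tempo in the middle of the game. This is the same flavour of finite-but-fiddly check that appears in the gadget analyses of Theorem~\ref{thm:ungeo} and of the \textsc{NimG-RM+NL} reduction, and I would expect the write-up to follow those closely.
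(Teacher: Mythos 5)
Your overall framework is the same as the paper's (reduce normal \textsc{NimG-MR} to its mis\`ere version by hanging a ``poisoned'' gadget on every vertex, then run an induction as in Theorem~\ref{thm:dirgeo}), but the concrete construction has a genuine flaw. The single looped pendant vertex $z_x$ you propose for the \textsc{+L} case does not satisfy your key lemma, and in fact gives a wrong reduction. Take $G$ to be two looped vertices $u,v$ joined by an edge with $w(u)=w(v)=1$ and the pointer on $u$: this is a normal-play $\Pc$-position of \textsc{NimG-MR+L} (both first moves lose). In your $G'$, the first player wins the mis\`ere game by \emph{entering the gadget}: she plays $u\to z_u$ (emptying $z_u$), the opponent is forced back to $u$ and must empty it, she plays $u\to v$ (emptying $v$), the opponent is forced to $z_v$ and empties it, and she is now stranded on $z_v$ --- which is a \emph{win} in mis\`ere. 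So $o^-((G',u))=\Nc\neq\Pc=o^+((G,u))$. The root cause is that a depth-one gadget does not strand the opponent inside it: when $w(x)>0$ the opponent simply bounces back to $x$, so entering acts as a tempo move that forces the opponent to spend tokens of $x$, and such a pass-like option can be advantageous --- exactly the interference you feared in the \textsc{+NL} case, but it already kills your \textsc{+L} gadget. Your \textsc{+NL} gadget is left unspecified (``a short path, or a small cycle''), and since the whole point is a parity-sensitive choice, that is precisely the missing content rather than a detail.

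The paper's fix is a single gadget that handles both conventions: attach to every vertex of $G$ a pendant chain of \emph{three} vertices, each carrying one token. If a player enters the chain, the opponent can commit to marching deeper; play inside is then forced, and it is the \emph{opponent} who ends up stranded at the end of the chain, i.e.\ who wins under mis\`ere --- so entering the chain is unconditionally losing, independently of the remaining weight on the attachment vertex and of the rest of the graph. The length three is essential: length one lets the opponent bounce back (your counterexample), and length two would make entering a \emph{winning} move (the enterer gets stranded). Loops are irrelevant on the chain because its vertices carry a single token and are emptied on arrival, which is why one reduction covers \textsc{+L} and \textsc{+NL} simultaneously. With that gadget, your induction scheme (mirror winning moves of $G$; when the original game would end, the only remaining moves enter a chain and lose) goes through as in Theorem~\ref{thm:dirgeo}; as written, though, your proposal both asserts a false lemma for \textsc{+L} and defers the decisive gadget design for \textsc{+NL}.
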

\begin{proof}
 Let G be a graph. We construct the graph $G'$ by adding to each vertex of $G$ a chain of three vertices with weight $1$ (see figure \ref{fig : reduNMR}).
We claim that $o^+(G)=o^-(G')$, because when you play on $G'$, moving outside $G$ is always a losing move. 
The details of the proof are similar to theorem \ref{thm:dirgeo}.
\end{proof}
\begin{figure}
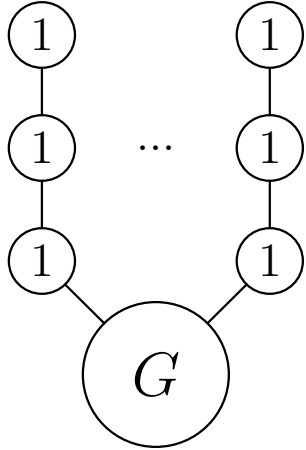

\begin{center}
 \scalemath{1.5}{
\begin{graph}(3,4)
\graphnodesize{0.6}
\graphnodecolour{1}
\autodistance{1}
\roundnode{g}(1,0)[\graphnodesize{1.3}]
\autonodetext{g}{$\scalemath{1.3}{G}$}
\graphnodesize{0.6}
\graphnodecolour{1}
\roundnode{a}(0,1)
\autonodetext{a}{$\scalemath{1}{1}$}
\roundnode{b}(0,2)
\autonodetext{b}{$\scalemath{1}{1}$}
\roundnode{c}(0,3)
\autonodetext{c}{$\scalemath{1}{1}$}
\freetext(1,2){$\scalemath{1}{...}$}
\roundnode{a2}(2,1)
\autonodetext{a2}{$\scalemath{1}{1}$}
\roundnode{b2}(2,2)
\autonodetext{b2}{$\scalemath{1}{1}$}
\roundnode{c2}(2,3)
\autonodetext{c2}{$\scalemath{1}{1}$}
\edge{g}{a}
\edge{a}{b}
\edge{b}{c}
\edge{g}{a2}
\edge{a2}{b2}
\edge{b2}{c2}
\end{graph}
}
\vspace{0.8cm}
\caption{Reduction from the normal version to the mis\`ere one}\label{fig : reduNMR}
\end{center}
\end{figure}

\section{Conclusion.}

In this work, we made a comprehensive study of the complexity of the mis\`ere version of the three games \textsc{Geography}, \textsc{NimG-RM} 
and \textsc{NimG-MR}. Except for the variant of \textsc{NimG-RM} with a loop on each vertex, all these games are \textsc{pspace}-hard or complete on general graphs.
This shows that the mis\`ere versions of those games are never easier than the normal ones. 
For \textsc{NimG-RM}+NL and \textsc{Vertex Geography} on undirected graphs there is even a real gap between the normal and the mis\`ere version,
since an effective winning strategy can be computed in polynomial time under normal play.

Our reductions for \textsc{Vertex Geography} on undirected graphs, \textsc{NimG-RM} 
and \textsc{NimG-MR} made an intensive use of odd cycles. Hence we investigated the restriction of those games to bipartite graphs. 
For \textsc{NimG-RM}, we showed that the game becomes polynomial in that case, whereas for \textsc{Vertex Geography} and \textsc{NimG-MR} the complexity is still
unknown.
\section{Acknowledgements.}

We thank Sylvain Gravier for the helpful discussions we had together all along this work.

\bibliographystyle{plain}

\bibliography{bibliothese}

\end{document}